\def\thspace{\hspace{0.25ex}}
\def\negthspace{\hspace{-0.25ex}}
\theoremstyle{remark}
\newtheorem{defn}{Definition}
\newtheorem{theorem}{Theorem}
\newtheorem{lemma}{Lemma}
\newcommand{\Mathscr}[1]{\text{\usefont{U}{rsfs}{m}{n}#1}}
\newcommand{\Mathbb}[1]{\text{\usefont{U}{bbold}{m}{n}#1}}
\newcommand{\Mathcal}[1]{\text{\usefont{OMS}{cmsy}{m}{n}#1}}
\def\T{\mathsf{T}}
\newcommand{\diff}[1]{\mathop{d\negthspace{#1}}}
\newcommand{\E}[1]{\Mathbb{E}[\thspace#1]}
\newcommand{\Var}[1]{\textup{Var}[\thspace #1]}
\newcommand{\Expectation}[1]{\Mathbb{E}\left[#1\right]}
\newcommand{\RevAdd}[1]{#1}
\newcommand{\RevMod}[1]{#1}
\newcommand{\ubar}[1]{\underaccent{\bar}{#1}}
\newcommand{\lah}[2]{\left\lfloor\begin{matrix} #1 \\ #2 \end{matrix}\right\rfloor}
\newcommand{\lahinline}[2]{\bigl\lfloor\!\begin{smallmatrix} #1 \\ #2 \end{smallmatrix}\!\bigl\rfloor}
\newcolumntype{M}[1]{>{\centering\arraybackslash}m{#1}}
\newcommandx{\yaHelper}[2][1=\empty]{%
\ifthenelse{\equal{#1}{\empty}}%
  {\ensuremath{\scriptstyle{#2}}}
  {\raisebox{#1}[0pt][0pt]{\ensuremath{\scriptstyle{#2}}}}
}
\newcommandx{\yrightarrow}[4][1=\empty, 2=\empty, 4=\empty, usedefault=@]{%
  \ifthenelse{\equal{#2}{\empty}}%
  {\xrightarrow{\protect{\yaHelper[#4]{#3}}}}
  {\xrightarrow[\protect{\yaHelper[#2]{#1}}]{\protect{\yaHelper[#4]{#3}}}}
}
\def\toprinline{\,{\yrightarrow{\;\textup{P}\;}[-0.6ex]}\,}
\def\topr{\;{\yrightarrow{\ \textup{P}\ }[-0.6ex]}\;}
\let\geq\geqslant
\let\leq\leqslant
\newlength\ploth 
\newlength\plotw
\begin{document}

\title{Fundamental Limits of Low-Density Spreading NOMA \RevAdd{with Fading}}
%
%
%

\author{Mai~T.~P.~Le,~\IEEEmembership{Student~Member,~IEEE},\\
        Guido~Carlo~Ferrante,~\IEEEmembership{Member,~IEEE},\\
        Tony Q.~S.~Quek,~\IEEEmembership{Senior Member,~IEEE},
       and~Maria-Gabriella~Di~Benedetto,~\IEEEmembership{Fellow,~IEEE}
\thanks{Mai~Thi~Phuong~Le is with the Department of Information Engineering, 
Electronics and Telecommunications, 
Sapienza University of Rome, Rome 00184, Italy (email: \texttt{mai.le.it@ieee.org})}
\thanks{Guido Carlo Ferrante was with the Singapore University of Technology and Design, Singapore 487372, and with the Massachusetts Institute of Technology, Cambridge, MA 02139 USA. He is now with Chalmers University of Technology, Gothenburg 41296, Sweden (email: \texttt{gcf@ieee.org})}
\thanks{Tony~Q.~S.~Quek is with the Information Systems Technology and Design Pillar, Singapore University of Technology and Design, 487372 Singapore (e-mail: \texttt{tonyquek@sutd.edu.sg})}
 \thanks{Maria-Gabriella~Di~Benedetto is with the Department of Information Engineering, 
Electronics and Telecommunications, 
Sapienza University of Rome, Rome 00184, Italy (email: \texttt{mariagabriella.dibenedetto@uniroma1.it})}

}

%
%

\markboth{Submitted to IEEE Transactions on Wireless Communications}%
{Shell \MakeLowercase{\textit{et al.}}: Bare Demo of IEEEtran.cls for IEEE Journals}
%



\maketitle

\begin{abstract}
Spectral efficiency of low-density spreading non-orthogonal multiple access channels in the presence of fading is derived for linear detection with independent decoding as well as optimum decoding. The large system limit, where both the number of users and number of signal dimensions grow with fixed ratio, called load, is considered. In the case of optimum decoding, it is found that low-density spreading underperforms dense spreading for all loads. Conversely, linear detection is characterized by different behaviors in the underloaded vs. overloaded regimes. In particular, it is shown that spectral efficiency changes smoothly as load increases. However, in the overloaded regime, the spectral efficiency of low-density spreading is higher than that of dense spreading.
\end{abstract}

\begin{IEEEkeywords}
Spectral efficiency, multiple access channels, non-orthogonal multiple access (NOMA). 
\end{IEEEkeywords}

%

\section{Introduction}

\subsection{Background and Motivation}
\label{background}
While expected to be standardized by the year 2020, the fifth generation (5G) currently receives considerable attention from the wireless community \cite{AndBuzChoetal:2014}. Massive \RevMod{multiple-input multiple-output (MIMO)}, millimeter-wave communications, ultra-dense networks, and non-orthogonal multiple access (NOMA) are four promising technologies, that are expected to address the targets of 5G wireless communications, including high spectral efficiency, massive connectivity, and low latency \cite{DaiWanYuaetal:2015, BocHeaLozetal:2014}. 

Back to the history of cellular communications from 1G to 4G, the radio multiple access schemes are mostly characterized by orthogonal multiple access (OMA), where different users are assigned to orthogonal resources in either frequency (\RevMod{frequency-division multiple access (FDMA) and orthorgonal FDMA (OFDMA)), time (time-division multiple access (TDMA)) or code (synchronous code-division multiple access (CDMA)} in underloaded condition) domains. However, 5G multiple access is required to support a wide range of use cases, providing access to massive numbers of low-power internet-of-thing (IoT), as well as broadband user terminals in the cellular network. Providing high spectral efficiency, while minimizing signaling and control overhead to improve efficiency, may not be feasible to achieve by OMA techniques \cite{Qualcomm:2015}. In fact, the orthogonality condition can be imposed as a requirement only when the system is \textit{underloaded}, that is, when the number of active users is lower than the number of available resource elements (degrees of freedom or dimensions).

The idea of NOMA is to serve multiple users in the same band and abandon any attempt to provide orthogonal access to different users as in conventional OMA. Orthogonality naturally drops when the number of active users is higher than the number of degrees of freedom, and ``collisions'' appear. One possible way of controlling collisions in NOMA is to share the same signal dimension among users and exploit power (\RevMod{power-domain NOMA (PDM-NOMA)}) vs. code (\RevMod{code-domain NOMA (CDM-NOMA)}) domains \cite{DaiWanYuaetal:2015}. In PDM-NOMA, it uses superposition coding, a well-known non-orthogonal scheme for downlink transmissions \cite{CovTho:2012}, 
and makes superposition decoding possible by allocating different levels of power to different users \cite{Choi:2016}. The ``near'' user, with a  higher channel gain, is typically assigned with less transmission power, which helps making successive interference cancellation (SIC) affordable at this user \cite{Van:2012}. In CDM-NOMA, it is characterized by different dialects, such as low-density spreading CDMA (LDS) \cite{HosWatTaf:2008, RazHosIm:2011, BeePop:2009}, low-density spreading \RevMod{orthogonal frequency-division multiplexing (LDS-OFDM)} \cite{RazImaIm:2012}, sparse code multiple access (SCMA)\cite{ZhaZhoZho:2017}, pattern division multiple access (PDMA)\cite{CheRenGaoetal:2017}, and multi-user shared access (MUSA)\cite{YuaYuLi:2016}. As a matter of fact, CDM-NOMA variants enable flexible resource allocation, and reduce hardware complexity by relaxing orthogonality requirements. 

In this work, we focus on LDS. \RevAdd{As a typical variant of CDM-NOMA, LDS inherits all above advantages and will be shown later in this paper to obtain increased system throughput compare to conventional CDMA, particularly in massive communications. LDS, therefore, may be appropriately fit to IoT scenario \cite{DaiWanYuaetal:2015} and is also considered as a potential candidate for uplink machine-type-communications (mMTC) \cite{DaiWanYuaetal:2015}.} Conventional direct-sequence CDMA (DS) is based on the spread spectrum technique, that uses spreading sequences to spread the signal over a given bandwidth. 
In traditional CDMA, signal dimensions, also known as chips (the terminology stemmed from the chip-rate of the sample), are all filled in with nonzero values, making the structure of DS be a form of ``dense spreading'' with nonzero values commonly binary or spherical \cite{VerSha:1999}. The idea of LDS is to use spreading sequences that are the sparse counterparts of the dense spreading sequences of conventional CDMA; a fraction only of the dimensions is filled with nonzero entries \cite{HosWatTaf:2008}. The same concept of LDS can be found in \cite{FerDiB:2015} within the framework of time hopping CDMA, where time hopping and chips are mapped to frequency hopping and subbands, respectively. Specifically, the analysis therein can be 
considered as a reference for LDS in terms of information theoretic bounds. 
 
On the other hand, the massive connectivity of 5G wireless communications is modeled by letting the number of devices to be much larger compared to the number of degrees of freedom. 
The behavior of DS with random spreading was analyzed in the large system limit, where the number of users and dimensions go to infinity with same scaling, in pioneering works of Tse and Hanly \cite{TseHan:1999}, Tse and Zeitouni \cite{TseZei:2000}, Verd\'u and Shamai \cite{VerSha:1999}, and Shamai and Verd\'u \cite{ShaVer:2001}. Subsequently, LDS was similarly analyzed in \cite{FerDiB:2015} in the case of a channel without fading. There has been no investigation of the effect of frequency-flat fading so far on the spectral efficiency of LDS. 

Therefore, the goal of this paper is to fill the gap by investigating LDS within the information theoretic framework considered in \cite{VerSha:1999,ShaVer:2001,FerDiB:2015} in the presence of frequency-flat fading. We analyze fundamental limits in the large system limit when the number of simultaneous transmissions becomes large with respect to the number of degrees of freedom.

\subsection{Other Related Work}
Based on the scaling between the number of users and number of degrees of freedom, other related works beyond those mentioned so far investigated either large-scale systems \cite{MonTse:2006, RaySaad:2007,YosTan:2006} or small-scale systems \cite{HosWatTaf:2008,RazHosIm:2011,BeePop:2009}. The two different regimes require asymptotic derivations (as the number of users and degrees of freedom grow with same scaling) and non-asymptotic derivations (for finite values of the number of users and degrees of freedom), respectively. The aforementioned literature is detailed as follows:

\subsubsection{Large-scale system}
Most of prior works \cite{RaySaad:2007, YosTan:2006} on LDS in the large system limit was derived by means of the replica method, which was first used for DS by Tanaka \cite{Tan:2002}. Since the replica method is not rigorous, Tanaka's capacity formula was verified (up to a given load, called spinodal, approximately equal to $\beta_{\textup{s}} \approx 1.49$) in the large system limit by Montanari and Tse in \cite{MonTse:2006}, where random spreading with sparse sequences was used in the proof, jointly with belief propagation detection. 
Adopting the replica method, Raymond and Saad in \cite{RaySaad:2007} and Yoshida and Tanaka in \cite{YosTan:2006} analyzed binary sparse CDMA in terms of spectral efficiency with different assumptions on the sparsity level (i.e., the number of nonzero entries) $N_{\textup{S}}$ of signatures (in particular, $N_{\textup{S}}$ is a deterministic finite value in \cite{RaySaad:2007}, whereas $N_{\textup{S}}$ is a Poissonian random variable in \cite{YosTan:2006}). 

\subsubsection{Small-scale system}
Recent investigations \cite{HosWatTaf:2008,RazHosIm:2011,BeePop:2009} analyzed LDS with finite values for the number of users and signal dimensions, in the \textit{overloaded} regime, where the number of users exceeds the number of dimensions. 
In \cite{HosWatTaf:2008}, each user spreads data over a small number of dimensions (e.g., $N_{\textup{S}}=3$) with other dimensions being zero padded. The resulting spreading sequence for each user is then interleaved such that the signature matrix from all $K$ users appears to be very sparse. The analysis focused on the bit error rate for different receiver structures. A comparison with different received powers was also described to address the near-far problem. Using the same framework proposed in \cite{HosWatTaf:2008}, an information theoretic analysis of LDS with fading was presented in \cite{RazHosIm:2011} for a bounded numbers of active users. In particular, the capacity region of time-varying fading LDS channel was analytically determined and tested by simulation, given different sparsity levels and different maximum number of users per dimension. 

\subsection{Approach \RevAdd{and Contribution}}
\RevMod{In this paper, we extend the information theoretic framework of time- and frequency-hopping CDMA considered in \cite{FerDiB:2015} for LDS in the presence of frequency-flat fading along the lines of \cite{ShaVer:2001}. In \cite{FerDiB:2015}, the reference channel is the additive white Gaussian noise (AWGN) channel: in order to apply some of the result derived in \cite{FerDiB:2015} in an IoT setting, it is mandatory to extend the analysis to channels with fading. 
We propose an information theoretic analysis where achievable spectral efficiency with different receiver structures is derived for the case of sparse signatures ($N_{\textup{S}}=1$), and compare our results to the spectral efficiency of direct-sequence (DS) CDMA, which represents the archetypal example of dense spreading ($N_{\textup{S}}=N$), under the same input constraints such as energy per symbol and bandwidth \cite{ShaVer:2001}. }

\RevAdd{
The major contributions of this paper are as follows:
\begin{itemize}
\item A rate achievable with linear detection is derived in Theorem \ref{theo:SUMF} in closed form. It is possible to show that sparse signaling outperforms dense signaling when the network is overloaded ($K>N$) and that the effect of fading is to slightly increase the achievable rate in this region.
\item The spectral efficiency with optimum detection is derived in Theorem \ref{theo:C_opt} in closed form. It is possible to show that dense signaling outperforms sparse signaling in this setup.
\item The spectral efficiency with optimum detection is derived by finding the limiting spectral distribution of a matrix ensemble that jointly describes spreading and fading: this is a mathematical result of independent interest. The combinatorial structure of the moments of such distribution is compared to the combinatorial interpretations available for the case of LDS and DS without fading.
\item The spectral efficiency with optimum detection in the large system limit also validates the decoupling principle in the CDMA literature, showing its equivalence to the average rate of a set of parallel channels. Intuitively, the multiuser low-density NOMA with optimum detection may be interpreted as a bank of channels, where each channel experiences an equivalent single-user channel.
\item The results provide an insight into the design of signaling in dense networks. As envisioned in the IoT setting, many simple transceivers will be part of large networks: results in this paper suggest that, in the uplink of such networks, sparse signaling can achieve a rate several times larger than that achievable via dense signaling. 
\end{itemize}
}


\subsection{Organization}
The paper is organized as follows. Section \ref{sec:model} introduces the reference model for LDS based on the general framework of traditional DS with the same energy and bandwidth constraints. The most important results in the literature relevant to our analysis are recalled in Section \ref{sec:previousresults}. Achievable spectral efficiency of LDS with linear and optimum receivers are presented in Section \ref{sec:mainsec}. Finally, conclusions based on the comparison of fundamental limits of LDS in 5G network are drawn in Section \ref{sec:conclusion}.
\medskip

\textit{Notation}. Expectation operator is denoted $\Mathbb{E}$. We denote $[N]$ the set of integers $\{ 1,2,\dotsc,N\}$. $\bm{e}_i^{n}$ with $i\in [n]$ stands for the $i$\textsuperscript{th} vector of the canonical basis of $\Mathbb{R}^n$. The $j$\textsuperscript{th} element of a vector $\bm{v}$ is denoted by $[\bm{v}]_j$. Kronecker delta is denoted by $\delta_{ij}$, hence $\delta_{ij}=1$ if $i=j$ and $\delta_{ij}=0$ otherwise. If the base is not explicited, $\log$ means natural logarithm. Complex conjugation and hermitian transposition are denoted by $^{\ast}$. Convergence in probability of a sequence of random variables $(X_n)_{n\geq 0}$ to $X$ is denoted by $X_n \toprinline X $.

\section{Reference model}
\label{sec:model}

The proposed reference model of a LDS system in the presence of frequency-flat fading follows the traditional discrete complex-valued CDMA model 
\begin{equation} \label{eq:maineq}
\bm{y}=\bm{SAb}+\bm{n},
\end{equation}
where: $\bm{y}\in\Mathbb{C}^N$ is the received signal; $\bm{b}=[b_1, \dotsc, b_K]^{\T}  \in \Mathbb{C}^K$ is the vector of symbols transmitted by the $K$ users; $\bm{S}=[\bm{s}_1,\dotsc,\bm{s}_K] \in \Mathbb{R}^{N \times K}$ is a random spreading matrix, column $i$ being the unit-norm spreading sequence of user $i$; $\bm{A} \in \Mathbb{C}^{K\times K}$ is a diagonal matrix of complex-valued fading coefficients $\textup{diag}(a_{1},\dotsc, a_{K})$; and $\bm{n} \in \Mathbb{C}^N$ is a circularly symmetric Gaussian vector with a zero mean and covariance $\Mathcal{N}_{0}\bm{I}$. Users transmit independent symbols and obey the power constraint $\Mathbb{E}[|\thspace b_k|^2] \leqslant \Mathcal{E}$ for all $k$, hence
\begin{equation}
\label{eq:Power_constraint}
\E{\bm{b}\bm{b}^\ast}=\Mathcal{E}\bm{I}.
\end{equation}
The load of the system is defined as the ratio between the number of users $K$ and the number of dimensions $N$, and is denoted by $\beta:=K/N$. Systems with $\beta<1$ and $\beta>1$ are referred to as underloaded and overloaded systems, respectively. 

Both LDS and DS systems can be modeled by \eqref{eq:maineq} with sparse and dense spreading matrix $\bm{S}$, respectively. In the simplest models, all elements of $\bm{S}$ are nonzero in DS, e.g. $s_{ki} \in \{\pm 1/\sqrt{N}\}$, while all but one element per column is nonzero in LDS, i.e. $\bm{s}_{k}\in\{\pm\bm{e}^{N}_{i}\}_{i=1,\dotsc,N}$. For the sake of clarity, we define rigorously below what we mean by sparse vector and sparse matrix. 
\begin{defn}[Sparse vector]
\label{defn:sparse_seq}
A vector $\bm{v}\in\Mathbb{R}^{N}$ is $N_\textup{S}$-sparse if the cardinality of the set of its nonzero elements is $N_\textup{S}$, i.e. $\|\bm{v}\|_{0} :=|\{v_i \ne 0\}_{i=1,\dotsc,N}|=N_\textup{S}$.
\end{defn}

\begin{defn}[Sparse matrix]
\label{defn}
A matrix $\bm{S}=[\bm{s}_1,\dotsc,\bm{s}_K]$ is $N_\textup{S}$-sparse if each column $\bm{s}_k$ is an $N_\textup{S}$-sparse vector.
\end{defn}

A reference model for time- and frequency-hopping CDMA was presented in \cite{FerDiB:2015} building on the seminal paper \cite{VerSha:1999}. 
The present work extends the model of \cite{FerDiB:2015} by introducing fading along the lines of \cite{ShaVer:2001}. \RevAdd{Notice that the assumption underpinning the fading model is that fading coefficients do not change over the whole signature, and more generally over the whole coherence block. This assumption may seem to clash with the pursued large system analysis since the latter requires increasingly large signatures. However, notice that the large system limit is only used to derive closed form expressions of performance of interest: It is well known that results derived in the large system limit are in fact very good approximations of performance of finite systems. The only important assumption is to keep the same load $\beta$ in the finite system and in the large system.}

In the following, we consider the very sparse scenario corresponding to sparse matrices with $1$-sparse column vectors. In this case, each spreading sequence $\bm{s}_k$ contains only one nonzero element, equal to either $+1$ or $-1$, with equal probability. Hence, the energy of the sequence is concentrated in just one nonzero pulse, while in DS, the energy is uniformly spread over all $N$ dimensions. 

System performance is measured by spectral efficiency $C$, defined as the total number of bits per dimension, that can be reliably transmitted \cite{FerDiB:2015, VerSha:1999, ShaVer:2001}. 
The per-symbol signal-to-noise ratio (SNR) is given by \cite{Ver:2002}
\begin{equation} 
\label{eq:EbN0_R}
\gamma	:= \frac{\frac{1}{K}\E{\|\bm{b}\|^{2}}}{\frac{1}{N}\E{\|\bm{n}\|^{2}}} 
		= \frac{N}{K}\cdot\frac{b}{N}\cdot\frac{\Mathcal{E}_\textup{b}}{\Mathcal{N}_0} 
		= \frac{1}{\beta}\cdot C\cdot \eta,
\end{equation}
where $C=b/N$ is expressed in bits per dimension, $b$ is the number of bits encoded in $\bm{b}$, $\E{\|\bm{b}\|^{2}}=b\Mathcal{E}_\textup{b}$, $\E{\|\bm{n}\|^{2}}=N\Mathcal{N}_0$, and $\eta:=\Mathcal{E}_\textup{b}/\Mathcal{N}_0$.  

\section{Previous Results}
\label{sec:previousresults}
In this section, we summarize the results in the literature that are most relevant to our analysis, namely spectral efficiency for LDS without fading and spectral efficiency of DS with and without fading.  

\subsubsection{Spectral efficiency in the absence of fading for LDS and DS}
The model in \eqref{eq:maineq} reduces to that in \cite{FerDiB:2015} when $\bm{A}=\bm{I}$ (no fading). Optimum decoding with LDS achieves the following spectral efficiency:
\begin{equation}
 C^{\textup{opt}}_{\textup{lds}}(\beta,\gamma) = \sum_{k\geq 0} \frac{\beta^{k}e^{-\beta}}{k!}\log_{2}(1+k\gamma)\ \textup{bits/s/Hz}.
\end{equation}
Spectral efficiency with DS is \cite{VerSha:1999,ShaVer:2001}
\begin{multline}
 C^{\textup{opt}}_{\textup{ds}}(\beta,\gamma) = \beta \log_{2}\Big(1+\gamma-\frac{1}{4}\Mathcal{F}(\gamma,\beta) \Big)
  + \log_{2}\Big(1+\beta\gamma-\frac{1}{4}\Mathcal{F}(\gamma,\beta) \Big) \\ -\frac{1}{4\log 2}\cdot\frac{\Mathcal{F}(\gamma,\beta)}{\gamma}\ \textup{bits/s/Hz},
\end{multline}
where
\begin{equation}
\Mathcal{F}(x,z) = \bigg(\!\sqrt{x(1+\sqrt{z})^2+1}-\sqrt{x(1-\sqrt{z})^2+1}\,\bigg)^{ 2}.
\end{equation}
Linear detectors, such as single-user matched filter (SUMF), zero-forcing (ZF), and minimum mean square error (MMSE), result in the same mutual information with LDS. An achievable spectral efficiency for these multiple access channels is $R_{\textup{lds}}^{\textup{sumf}}=\beta I(b_1;r_1|\bm{S})$ (b/s/Hz) with $\bm{b}$ Gaussian and $\bm{S}$ sparse, where $I(b_1;r_1|\bm{S})$ is the achievable rate (bits/symbol) of user $1$: 
\begin{multline}
R_{\textup{lds}}^{\textup{sumf}}(\beta,\gamma)=R^{\text{zf}}_{\textup{lds}}(\beta,\gamma)=R^{\text{mmse}}_{\textup{lds}}(\beta,\gamma) =\beta\sum_{k\geq 0}\dfrac{\beta^k e^{-\beta}}{k!}\log_2\left(1+\frac{\gamma}{k\gamma+1} \right)\ \textup{bits/s/Hz}.
\end{multline}
Differently from LDS, linear detectors with DS achieve different spectral efficiency. Among the above mentioned linear detectors, MMSE achieves the highest spectral efficiency, which is equal to \cite{VerSha:1999}
\begin{equation}
 C^{\textup{mmse}}_{\textup{ds}}(\beta,\gamma)= \beta \log\Big(1+\gamma-\frac{1}{4}\Mathcal{F}(\gamma,\beta) \Big).
\end{equation}
 
\subsubsection{Spectral efficiency in the presence of fading for DS}
In the presence of fading, spectral efficiency with optimum decoding is \cite{ShaVer:2001}
\begin{equation}
 C^{\textup{opt}}_{\textup{ds}}(\beta,\gamma) =  C^{\textup{mmse}}_{\textup{ds}}(\beta,\gamma) + \frac{\eta-1-\log\eta}{\log 2}
\end{equation}
where $\eta>0$ satisfies the following fixed point equation
\begin{equation}
\eta = 1 - \beta + \beta \,\Expectation{\frac{1}{1+\eta|\thspace a|^{2}\gamma}},
\end{equation}
and $C^{\textup{mmse}}_{\textup{ds}}(\beta,\gamma)$ is spectral efficiency with MMSE, given by
\begin{equation}
C^{\textup{mmse}}_{\textup{ds}}(\beta,\gamma) = \beta \,\E{\log_{2}(1+\gamma|\thspace a|^{2}\eta)}.
\end{equation}
It is noteworthy that fading increases spectral efficiency with MMSE at high load. The intuition provided in \cite[Section III-C]{ShaVer:2001} is that some user appears very low-powered at the receiver, thus the ``interference population,'' i.e., the number of effective interferers is reduced. A similar behavior is not observed with ZF, which removes all interference irrespective of power. This effect is called ``interference population control.'' 
 
\section{Spectral Efficiency of LDS with Frequency-Flat Fading}\label{sec:mainsec}
In this section, we derive spectral efficiency with a bank of single-user matched filters and independent decoding in Section~\ref{subsec:SUMF} and with optimum decoding in Section~\ref{subsec:optdec}.

\subsection{Single-User Matched Filter (SUMF)}
\label{subsec:SUMF}
The decision variable for user $1$ is 
\begin{equation}
\begin{aligned} \label{eq:y1}
r_1 & = \bm{s}_{\negthspace 1}^{\T} \bm{y}\\
&=\bm{s}_{\negthspace 1}^{\T}  \left(\, \sum_{k=1}^K\bm{s}_k a_k b_k \right)+\bm{s}_{\negthspace 1}^{\T} \bm{n}\\
&=a_1 b_1+\sum_{k=2}^K\bm{s}_{\negthspace 1}^{\T}  \bm{s}_k a_k b_k  +\bm{s}_{\negthspace 1}^{\T} \bm{n},
\end{aligned}
\end{equation}
where the last step follows from the signatures being unit norm. %
Assuming Gaussian coding, $b_{k}\sim\Mathcal{N}_{\Mathbb{C}}(0,\Mathcal{E})$, the conditional mutual information (bits/symbol) for user $1$ is 
\begin{align}
\label{eq:I_SUMF_gen}
\hspace{-1ex}I(r_1;b_1|\bm{S},\bm{A}) & = I(y_1;b_1|\rho_{12},\dotsc,\rho_{1K},a_1,\dotsc,a_K) \nonumber \\
&=\Mathbb{E}\left[\log_2\left(1\hspace{-0.25ex}+\hspace{-0.25ex}\frac{|\thspace a_1|^2\gamma}{1\hspace{-0.25ex}+\hspace{-0.25ex}\gamma\sum_{k=2}^{K}\rho_{1k}^2|\thspace a_k|^2}\right)\right],
\end{align}
where $\rho_{1k}:=\bm{s}_{\negthspace 1}^{\T} \bm{s}_k$ and the expectation is taken with respect to $\lbrace{\rho_{12},\dotsc,\rho_{1K}}\rbrace$ and $\lbrace a_1,\dotsc,a_K\rbrace$. %
The corresponding mutual information of the multiuser channel is 
\[ R_{\textup{lds}}^{\textup{sumf}}(\beta,\gamma):=\beta I(r_1;b_1|\bm{S},\bm{A}) \  \textup{bits/s/Hz}. \] 
In the following theorem we propose an explicit form of \eqref{eq:I_SUMF_gen} for $1$-sparse matrices (cf. Definition \ref{defn}).

\begin{theorem}
\label{theo:SUMF}
\RevMod{ Let $\bm{S} \in \Mathbb{R}^{N\times K}$ be a $1$-sparse spreading matrix. In the large system limit, the following rate is achievable with a bank of SUMF detectors:}
\begin{equation} \label{I_TH_final}
R_{\textup{lds}}^{\textup{sumf}}(\beta,\gamma)=\frac{\beta}{\log 2}\int_{0}^{1}  
 \dfrac {e^{-t \left(\beta +\frac{1}{1-t}\cdot\frac{1}{\gamma}\right)}}  {1-t} \diff{t}\ \textup{bits/s/Hz}.
\end{equation}
\end{theorem}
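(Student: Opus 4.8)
The plan is to turn the conditional mutual information in \eqref{eq:I_SUMF_gen} into a Poisson mixture of single-user Gaussian rates, and then to collapse that mixture into the claimed integral by means of a Frullani representation of the logarithm.

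First I would exploit the structure of $1$-sparse signatures. Since $\rho_{1k}=\bm{s}_{1}^{\T}\bm{s}_{k}$ equals $\pm1$ exactly when user $k$ places its single nonzero pulse in the same coordinate as user $1$ and equals $0$ otherwise, conditionally on the coordinate of user $1$ the variables $\rho_{1k}^{2}$, $k=2,\dots,K$, are i.i.d.\ $\mathrm{Bernoulli}(1/N)$ and independent of the fading. Hence the interference power $\sum_{k=2}^{K}\rho_{1k}^{2}|a_{k}|^{2}$ has the law of $\sum_{j=1}^{M_{N}}|a_{j}|^{2}$, where $M_{N}\sim\mathrm{Binomial}(K-1,1/N)$ counts the users colliding with user $1$ and the $|a_{j}|^{2}$ are i.i.d.\ copies of the fading power. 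In the large system limit $K=\beta N$, $N\to\infty$, one has $M_{N}\todistr\mathrm{Poisson}(\beta)$, so the interference power converges in distribution to $\sum_{j=1}^{M}|a_{j}|^{2}$ with $M\sim\mathrm{Poisson}(\beta)$; since for each fixed $a_{1}$ the integrand of \eqref{eq:I_SUMF_gen} is a bounded continuous decreasing function of the interference power, dominated above by the integrable quantity $\log_{2}(1+|a_{1}|^{2}\gamma)$, weak convergence together with dominated convergence in $a_{1}$ gives
\[
\lim_{N\to\infty} I(r_{1};b_{1}\,|\,\bm{S},\bm{A})=\sum_{m\ge0}\frac{\beta^{m}e^{-\beta}}{m!}\,\mathbb{E}\!\left[\log_{2}\!\Big(1+\frac{|a_{1}|^{2}\gamma}{1+\gamma\sum_{j=1}^{m}|a_{j}|^{2}}\Big)\right].
\]

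Next I would evaluate this series in closed form. Writing $I_{m}=\sum_{j=1}^{m}|a_{j}|^{2}$ and applying $\log b-\log a=\int_{0}^{\infty}u^{-1}(e^{-au}-e^{-bu})\,du$ with $a=1+\gamma I_{m}$ and $b=1+\gamma I_{m}+\gamma|a_{1}|^{2}$, the logarithm inside the expectation becomes $\int_{0}^{\infty}u^{-1}e^{-u(1+\gamma I_{m})}\big(1-e^{-u\gamma|a_{1}|^{2}}\big)\,du$. Exchanging expectation and integral (Tonelli; the integrand is nonnegative) and using that the fading power $|a|^{2}$ is unit-mean exponential (Rayleigh fading), the elementary transforms $\mathbb{E}[1-e^{-u\gamma|a_{1}|^{2}}]=u\gamma/(1+u\gamma)$, $\mathbb{E}[e^{-u\gamma I_{m}}]=(1+u\gamma)^{-m}$ and the Poisson probability generating function $\sum_{m\ge0}(\beta^{m}e^{-\beta}/m!)(1+u\gamma)^{-m}=\exp(-\beta u\gamma/(1+u\gamma))$ collapse the series to
\[
\lim_{N\to\infty} I(r_{1};b_{1}\,|\,\bm{S},\bm{A})=\frac{\gamma}{\log 2}\int_{0}^{\infty}\frac{e^{-u}}{1+u\gamma}\,\exp\!\Big(-\frac{\beta u\gamma}{1+u\gamma}\Big)\,du.
\]

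Finally, the substitution $t=u\gamma/(1+u\gamma)$, i.e.\ $u=t/(\gamma(1-t))$, $1+u\gamma=1/(1-t)$, $du=dt/(\gamma(1-t)^{2})$, maps $[0,\infty)$ bijectively onto $[0,1)$ and turns the integral into $\tfrac{1}{\log 2}\int_{0}^{1}(1-t)^{-1}\exp(-\beta t-t/(\gamma(1-t)))\,dt$; multiplying by $\beta$ and recalling $R_{\textup{lds}}^{\textup{sumf}}=\beta\,I(r_{1};b_{1}\,|\,\bm{S},\bm{A})$ yields \eqref{I_TH_final}. I expect the main obstacle to be the careful justification of the two interchanges (the large-system limit against the $u$-integral, and the sum over $m$ against the $u$-integral), which nonetheless become routine once the dominating function $\log_{2}(1+|a_{1}|^{2}\gamma)$ has been isolated; the one genuinely load-bearing idea is the Frullani representation, which linearizes the dependence on both $|a_{1}|^{2}$ and $I_{m}$ so that the exponential fading and the Poisson mixture are each dispatched by a single transform, after which only elementary calculus remains.
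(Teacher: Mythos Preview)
Your proof is correct and follows essentially the same route as the paper's. Both arguments hinge on the Frullani-type integral representation of the logarithm to linearize the dependence on $|a_{1}|^{2}$ and on the interference, after which the Rayleigh fading and the compound-Poisson structure are absorbed via moment generating functions; the final substitution $t=u\gamma/(1+u\gamma)$ is the paper's $t=s/(1+s)$ up to the reparametrization $s=u\gamma$. The only cosmetic difference is that you first pass to the explicit Poisson mixture and then collapse it through the PGF, whereas the paper computes the limiting MGF $M_{\zeta}(t)\to e^{\beta t/(1-t)}$ of the interference directly and never writes the mixture out; your version is a touch more careful about justifying the large-system limit (weak convergence plus domination by $\log_{2}(1+|a_{1}|^{2}\gamma)$), which the paper leaves implicit.
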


\begin{IEEEproof} See Appendix \ref{app:theo1}.\smallskip
\end{IEEEproof}

\begin{figure*}[t]
\centering
\includegraphics{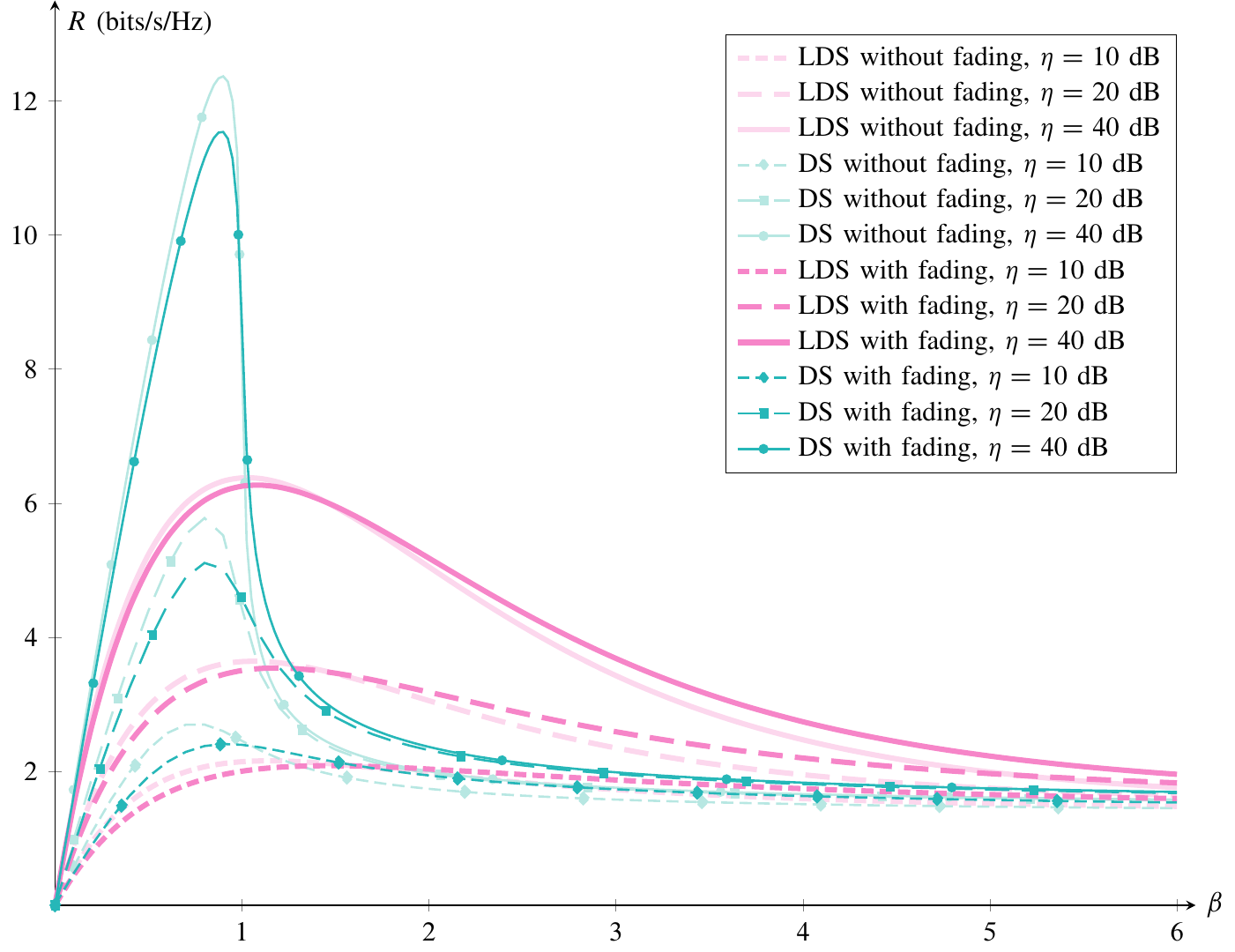}
\caption{Achievable spectral efficiency (bits/s/Hz) of LDS with SUMF detection (thick lines) and DS with MMSE detection (thin lines \RevAdd{with marks}) for several values of $\eta$ as a function of the load in the presence (dark shade) or absence (light shade) of fading.}
\label{fig:fig1}
\end{figure*}

The result in Theorem~\ref{theo:SUMF} allows us to study asymptotics for low and high SNR. In the low-SNR regime, the minimum energy per bit per noise level is given by (see Appendix~\ref{app:lowsnr} for the proof)
\begin{equation}\label{eq:etamin}
\eta_{\textup{min}} = \lim_{\gamma\to 0} \frac{\beta \gamma}{R_{\textup{lds}}^{\textup{sumf}}(\beta,\gamma)} = \log 2 \ \textup{dB},
\end{equation}
as in the case without fading, and with fading and dense spreading. Note that $\eqref{eq:etamin}$ holds for any $\beta>0$. The slope at $\eta=\eta_{\textup{min}}$ is (see Appendix~\ref{app:lowsnrslope} for the proof)
\begin{equation}\label{eq:slope0}
\hspace{-0.5ex}\Mathcal{S}_{0}^{\,\textup{sumf}} \hspace{-0.25ex}=\hspace{-0.25ex} 2\log 2\lim_{\gamma\to0}\frac{\big(\frac{\partial}{\partial\gamma}R_{\textup{lds}}^{\textup{sumf}}\thspace \big)^{ 2}}{-\frac{\partial^{2}}{\partial\gamma^{2}}R_{\textup{lds}}^{\textup{sumf}}} \hspace{-0.25ex}=\hspace{-0.25ex} \frac{\beta}{1+\beta}\ \textup{bits/s/Hz/(3 dB)},
\end{equation}
that is the same slope achieved with dense signaling. In the high-SNR regime, rate grows logarithmically with high-SNR slope equal to (see Appendix~\ref{app:slopeinf} for the proof)
\begin{equation}\label{eq:slopeinf}
\Mathcal{S}_{\infty}^{\,\textup{sumf}} = \log 2\lim_{\gamma\to\infty}\gamma \frac{\partial R_{\textup{lds}}^{\textup{sumf}}}{\partial\gamma} = \beta e^{-\beta}\ \textup{bits/s/Hz/(3 dB)},
\end{equation}
which is the same as LDS without fading, and, compared with the high-SNR slope achieved by DS with MMSE,
\begin{equation}\label{eq:slopeinfds}\Mathcal{S}_{\infty,\textup{ds}}^{\,\textup{mmse}}=\beta \Mathbb{1}_{\{\beta\in[0,1)\}} + \frac{1}{2} \Mathbb{1}_{\{\beta=1\}}+0  \Mathbb{1}_{\{\beta>1\}}, \end{equation}
shows that, for $\beta>1$, LDS is preferable to DS.

Figure \ref{fig:fig1} shows the achievable spectral efficiency with linear detection, and compares DS and LDS in the presence and absence of fading. In the presence of fading, the same qualitative phenomenon observed without fading holds, namely LDS outperforms DS, when load is approximately higher than unity. We stress that the curves for DS are capacities whereas the curves for LDS are merely achievable rates, and that this is sufficient to claim that LDS outperforms DS in the overloaded regime. The gap in performance with and without fading follows the same pattern for both DS and LDS, namely rates are decreased in the underloaded regime and increased in the overloaded regime. Finally, we notice that both LDS and DS are characterized by the same slope at $\beta=0$ and the same asymptotic value as $\beta\to\infty$. 

\subsection{Optimum decoding}
\label{subsec:optdec}
 
The spectral efficiency achieved with optimum decoding is the maximum (over the distributions on $\bm{b}$) normalized mutual information between $\bm{b}$ and $\bm{y}$ knowing $\bm{S}$ and $\bm{A}$, which is given by \cite{Ver:1986, ShaVer:2001}
\begin{equation}
\begin{aligned}
\label{eq:C_SAAS}
C_N^{\textup{opt}}(\beta,\gamma)
&=\dfrac{1}{N}\log_2 \det(\bm{I}+\gamma\bm{S}\bm{A}\bm{A}^{\ast}\bm{S}^{\ast}).
\end{aligned}
\end{equation}
We can express \eqref{eq:C_SAAS} in terms of the set of eigenvalues of the Gram matrix $\bm{S}\bm{A}\bm{A}^\ast\bm{S}^\ast$, $\{\lambda_{n}(\bm{S}\bm{A}\bm{A}^\ast\bm{S}^\ast)\colon 1 \leq n \leq N\}$, as follows:
\begin{equation}
\label{eq:C_opt_ESD}
C_N^{\textup{opt}}(\beta,\gamma)
=\int_0^{\infty} \log_2(1+\gamma\lambda) \diff{F_N^{\bm{S}\bm{A}\bm{A}^{\ast}\bm{S}^{\ast}}(\lambda)},
\end{equation}
being $F_N^{\bm{S}\bm{A}\bm{A}^{\ast}\bm{S}^{\ast}}(x)$ the \textit{empirical spectral distribution} (ESD) of $\bm{S}\bm{A}\bm{A}^{\ast}\bm{S}^{\ast}$, namely \cite{Gir:1990,BaiSil:2010}:
\begin{equation}
F_N^{\bm{S}\bm{A}\bm{A}^{\ast}\bm{S}^{\ast}}(x)
:= \frac{1}{N} \sum_{n=1}^N \Mathbb{1}_{\{\lambda_n(\bm{S}\bm{A}\bm{A}^{\ast}\bm{S}^{\ast}) \leq x\}}.
\end{equation}
Being $\bm{S}$ and $\bm{A}$ random, also $F_N^{\bm{S}\bm{A}\bm{A}^{\ast}\bm{S}^{\ast}}$ is random. In the large system limit, as is well known, the ESD can admit a limit (in probability or stronger sense), which is called \textit{limiting spectral distribution} (LSD) \cite{BaiSil:2010} and is denoted by $F(x)$. Hence, if the limit exists, spectral efficiency $C^{\textup{opt}}_N(\gamma)$ converges to
\begin{equation}
\label{eq:C_opt_LSD}
C^{\textup{opt}}(\beta,\gamma)=\int_0^{\infty} \log_2(1+\gamma\lambda) \diff{F(\lambda)}.
\end{equation}
Our main goal is, therefore, to find the LSD of the matrix ensemble $\{\bm{S}\bm{A}\bm{A}^{\ast}\bm{S}^{\ast}\}$. To this end, we compute in Theorem \ref{theo:mL} the average moments of the ESD in the large system limit and prove convergence in probability of the sequence of (random) moments of the ESD, 
\begin{equation}
\label{eq:m_L}
m_L:=\frac{1}{N}\textup{tr}(\bm{S}\bm{A}\bm{A}^{\ast}\bm{S}^{\ast})^L
=\int_0^{\infty}\lambda^L \diff{F_N^{\bm{S}\bm{A}\bm{A}^{\ast}\bm{S}^{\ast}}(\lambda)},
\end{equation}
to the (nonrandom) moments of the LSD. Then, by verifying Carleman's condition, Lemma \ref{lem:Carleman} shows that these moments uniquely specify the LSD \cite{Fel2:1968}. Finally, we use the LSD to derive the spectral efficiency in the large system limit in Theorem \ref{theo:C_opt}.

\begin{theorem}
\label{theo:mL}
Given the matrix ensemble $\{ \bm{S}\bm{A}\bm{A}^{\ast}\bm{S}^{\ast}\}$ with $\bm{S}$ an $N\times K$ sparse spreading matrix and $\bm{A}$ a $K\times K$ diagonal matrix of Rayleigh fading coefficients, it results
\begin{equation}
m_L \topr  \bar{m}_L:=\sum_{l=1}^L \lah{L}{l} \beta^l,
\end{equation}
where $\lahinline{L}{l}\,:=\binom{L-1}{l-1}\frac{L!}{l!}$ denotes the \textit{Lah numbers} \cite{Com:1974}.
\end{theorem}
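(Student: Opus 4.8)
The plan is to exploit the rigid combinatorial structure forced by $1$-sparsity, which collapses the random-matrix moment computation to a classical balls-in-bins problem. \textbf{Step 1 (reduction to a diagonal matrix).} Since every column of $\bm{S}$ has a single nonzero entry — say column $k$ has $\epsilon_k\in\{\pm1\}$ in row $\pi(k)$, with the $\pi(k)$ i.i.d.\ uniform on $[N]$ — the $(i,j)$ entry of $\bm{M}:=\bm{S}\bm{A}\bm{A}^{\ast}\bm{S}^{\ast}$ equals $\sum_{k}S_{ik}|a_k|^2 S_{jk}$, which vanishes unless $\pi(k)=i=j$. Hence $\bm{M}$ is \emph{diagonal} with $M_{ii}=\sum_{k:\pi(k)=i}|a_k|^2$; neither the signs $\epsilon_k$ nor the fading phases matter, only the i.i.d.\ unit-mean exponentials $|a_k|^2$ (Rayleigh), for which $\E{|a_k|^{2r}}=r!$. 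Therefore $m_L=\tfrac{1}{N}\sum_{i=1}^N M_{ii}^L$, and by exchangeability of the bins $\E{m_L}=\E{M_{11}^L}$.

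\textbf{Step 2 (exact expectation and the Lah numbers).} Expanding $M_{11}^L=\sum_{k_1,\dots,k_L\in[K]}\prod_{j}\Mathbb{1}\{\pi(k_j)=1\}\,|a_{k_j}|^2$ and grouping the tuples by the set partition $\sigma$ of $[L]$ they induce (equal indices lie in the same block), a tuple with $|\sigma|=m$ distinct values contributes $N^{-m}\prod_{B\in\sigma}|B|!$, and there are $(K)_m:=K(K-1)\cdots(K-m+1)$ such tuples. Thus
\[
\E{M_{11}^L}=\sum_{m=1}^L\frac{(K)_m}{N^m}\sum_{\substack{\sigma\vdash[L]\\ |\sigma|=m}}\prod_{B\in\sigma}|B|!=\sum_{m=1}^L\frac{(K)_m}{N^m}\,\lahinline{L}{m},
\]
where the second equality is the standard identity $\sum_{|\sigma|=m}\prod_{B}|B|!=\binom{L-1}{m-1}\tfrac{L!}{m!}=\lahinline{L}{m}$ — this is precisely where the Lah numbers appear, counting partitions of $[L]$ into $m$ linearly ordered blocks. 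Letting $N\to\infty$ with $K/N\to\beta$ gives $(K)_m/N^m\to\beta^m$, hence $\E{m_L}\to\sum_{m=1}^L\lahinline{L}{m}\beta^m=\bar{m}_L$. (Equivalently: $K_1:=|\pi^{-1}(1)|\todistrinline\mathrm{Poisson}(\beta)$, $M_{11}\mid\{K_1=k\}\sim\mathrm{Gamma}(k,1)$ has $L$th moment $k(k+1)\cdots(k+L-1)$, and one combines the rising-to-falling factorial expansion $x(x+1)\cdots(x+L-1)=\sum_m\lahinline{L}{m}\,x(x-1)\cdots(x-m+1)$ with the Poisson factorial moments $\E{(K_1)_m}=\beta^m$.)

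\textbf{Step 3 (variance vanishes).} By exchangeability, $\Var{m_L}=\tfrac{1}{N}\Var{M_{11}^L}+\tfrac{N-1}{N}\,\textup{Cov}(M_{11}^L,M_{22}^L)$. The first summand is $O(1/N)$ since $\Var{M_{11}^L}=\E{M_{11}^{2L}}-\E{M_{11}^L}^2=O(1)$ by the formula of Step 2 with $2L$ in place of $L$. For the cross term, conditioning on $\pi$ makes $M_{11}$ and $M_{22}$ functions of \emph{disjoint}, hence independent, collections of fading magnitudes, so $\E{M_{ii}^L\mid\pi}=\varphi(K_i)$ with $\varphi(k):=k(k+1)\cdots(k+L-1)$ and $K_i:=|\pi^{-1}(i)|$, whence $\textup{Cov}(M_{11}^L,M_{22}^L)=\textup{Cov}\big(\varphi(K_1),\varphi(K_2)\big)$; expanding $\varphi$ in falling factorials and using the multinomial identity $\E{(K_1)_l(K_2)_{l'}}=(K)_{l+l'}/N^{l+l'}$, the leading $K$-powers cancel and this covariance is $O(1/N)$. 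Hence $\Var{m_L}=O(1/N)\to0$, and Chebyshev's inequality gives $m_L\topr\bar{m}_L$.

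The only genuinely delicate point is Step 1 — noticing that $1$-sparsity forces $\bm{M}$ to be diagonal — which is exactly what renders the whole argument elementary; everything afterwards is bookkeeping, the single nontrivial ingredient being the partition identity that produces $\lahinline{L}{m}$. A minor technical care point is the interchange of limit and expectation in Steps 2--3, which is immediate from uniform integrability, all moments of $\mathrm{Binomial}(K,1/N)$ remaining bounded as $N\to\infty$ with $K/N\to\beta$.
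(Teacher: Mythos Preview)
Your proof is correct and follows the same three-step architecture as the paper: (i) observe that $1$-sparsity makes $\bm{SAA^{\ast}S^{\ast}}$ diagonal with entries $M_{ii}=\sum_{k:\pi(k)=i}|a_k|^2$, (ii) compute $\E{m_L}=\E{M_{11}^L}$, (iii) show $\Var{m_L}\to0$. The techniques inside Steps~2 and~3 differ, however. For the expectation, the paper writes down the MGF of $M_{11}$ in closed form, $M(t)=\big(1+\tfrac{t}{(1-t)N}\big)^{K}$, and differentiates $L$ times at $0$; you instead expand $M_{11}^L$ over set partitions and invoke the identity $\sum_{|\sigma|=m}\prod_{B}|B|!=\lahinline{L}{m}$ directly. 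Both routes land on the same finite-$N$ formula $\sum_m \lahinline{L}{m}(K)_m/N^m$, but yours makes the combinatorial provenance of the Lah numbers (ordered set partitions) explicit rather than implicit in the MGF derivatives. For the variance, the paper shows that $(K_1,K_2)$ become asymptotically independent Poisson via their joint MGF and argues $M_{11}\perp M_{22}$ asymptotically; you condition on $\pi$, reduce to $\textup{Cov}(\varphi(K_1),\varphi(K_2))$, and bound it as $O(1/N)$ via multinomial factorial moments. Your route is more explicit --- it yields a rate and does not tacitly rely on uniform integrability to pass from convergence in distribution to convergence of moments --- while the paper's MGF-based arguments are shorter once the relevant generating functions are recognized.
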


\begin{proof} See Appendix \ref{app:mL}.
\end{proof}

In particular, $\bar{m}_L$ is the $L$\textsuperscript{th} moment of the random variable $\sum_{j=1}^J Z_j$ where $J$ is distributed according to a Poisson law with mean $\beta$ and, conditionally on $J$, $\{Z_j\colon 1 \leq j \leq J\}$ is a set of i.i.d. exponentially distributed random variables with unit rate. 

In the following lemma, we verify that the LSD is uniquely determined by the sequence of moments $(\bar{m}_L)_{L \geq 1}$.
\begin{lemma}
\label{lem:Carleman}
The sequence of moments $(\bar{m}_L)_{L\geq 1}$ satisfies the Carleman's condition, namely the series $\sum_{k\geq 1} \bar{m}_{2k}^{-1/(2k)}$ diverges.
\end{lemma}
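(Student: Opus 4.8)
The plan is to derive an explicit factorial‑order upper bound on the moments $\bar{m}_L$ and then observe that such growth is far too slow for the Carleman series to converge. I would work directly from the closed form of Theorem~\ref{theo:mL}, $\bar{m}_L=\sum_{l=1}^{L}\lah{L}{l}\beta^{l}$ with $\lahinline{L}{l}=\binom{L-1}{l-1}\frac{L!}{l!}$; equivalently one may use the compound‑Poisson representation $\bar{m}_L=\E{X^L}$, $X=\sum_{j=1}^{J}Z_j$, recorded just after Theorem~\ref{theo:mL}.

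\textbf{Step 1 (a factorial bound on the moments).} Since $l\ge 1$ gives $L!/l!\le L!$, I would estimate
\[
\bar{m}_L \;=\; \sum_{l=1}^{L}\binom{L-1}{l-1}\frac{L!}{l!}\,\beta^{l}
\;\le\; L!\sum_{l=1}^{L}\binom{L-1}{l-1}\beta^{l}
\;=\; L!\,\beta(1+\beta)^{L-1}
\;\le\; L!\,(1+\beta)^{L},
\]
where the middle equality is the binomial theorem after the shift $l\mapsto l-1$. (The same bound follows probabilistically: conditionally on $J$, $X$ is $\mathrm{Gamma}(J,1)$, so $\E{X^{L}\mid J}=L!\binom{J+L-1}{L}\le L!\,2^{J+L-1}$, and $\E{2^{J}}=e^{\beta}$ for $J\sim\mathrm{Poisson}(\beta)$, which gives $\bar m_L\le L!\,2^{L-1}e^{\beta}$ — equally adequate.)

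\textbf{Step 2 (root test and conclusion).} Setting $L=2k$ and using $(2k)!\le(2k)^{2k}$, Step~1 yields $\bar{m}_{2k}\le\bigl(2k(1+\beta)\bigr)^{2k}$, hence $\bar{m}_{2k}^{-1/(2k)}\ge \dfrac{1}{2(1+\beta)}\cdot\dfrac{1}{k}$ for every $k\ge 1$. Summing over $k$,
\[
\sum_{k\ge 1}\bar{m}_{2k}^{-1/(2k)} \;\ge\; \frac{1}{2(1+\beta)}\sum_{k\ge 1}\frac{1}{k} \;=\; +\infty ,
\]
which is precisely Carleman's condition; by \cite{Fel2:1968} the sequence $(\bar{m}_L)_{L\ge 1}$ therefore determines a unique distribution on $[0,\infty)$, namely the LSD $F$ used in Section~\ref{subsec:optdec}.

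I do not anticipate a real obstacle. The only point that genuinely matters is that the estimate be of true factorial order $L!$ (a reflection of the exponential tail of the compound‑Poisson law of $X$): a bound of order $(L!)^{2}$ or worse would make the Carleman series converge and leave the uniqueness question unresolved. The elementary inequality chain in Step~1 already delivers this, so no sharper analysis is needed — although one could alternatively note that the moment generating function $\E{e^{sX}}=\exp\!\bigl(\beta s/(1-s)\bigr)$ is finite for $s\in(0,1)$, which is a non‑quantitative route to the same uniqueness conclusion.
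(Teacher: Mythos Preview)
Your proof is correct and follows essentially the same route as the paper: both arguments bound $\bar{m}_{2k}$ by a quantity of the form $(Ck)^{2k}$ and then compare the Carleman series to a harmonic series. The only cosmetic difference is that the paper bounds $\lahinline{2k}{l}\big/\binom{2k}{l}=(2k-1)!/(l-1)!\le(2k-1)^{2k-l}$ before invoking the binomial theorem, arriving at $\bar{m}_{2k}<\bigl((2k-1)(1+\beta)\bigr)^{2k}$, whereas you pull out the factor $L!$ first and use $(2k)!\le(2k)^{2k}$ to get $\bar{m}_{2k}\le\bigl(2k(1+\beta)\bigr)^{2k}$; the resulting lower bounds $\tfrac{1}{(1+\beta)}\sum\tfrac{1}{2k-1}$ and $\tfrac{1}{2(1+\beta)}\sum\tfrac{1}{k}$ are of course equivalent for the purpose at hand.
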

\begin{proof} See Appendix \ref{app:Carleman}.
\end{proof}

\begin{figure*}[t]
\centering
\includegraphics{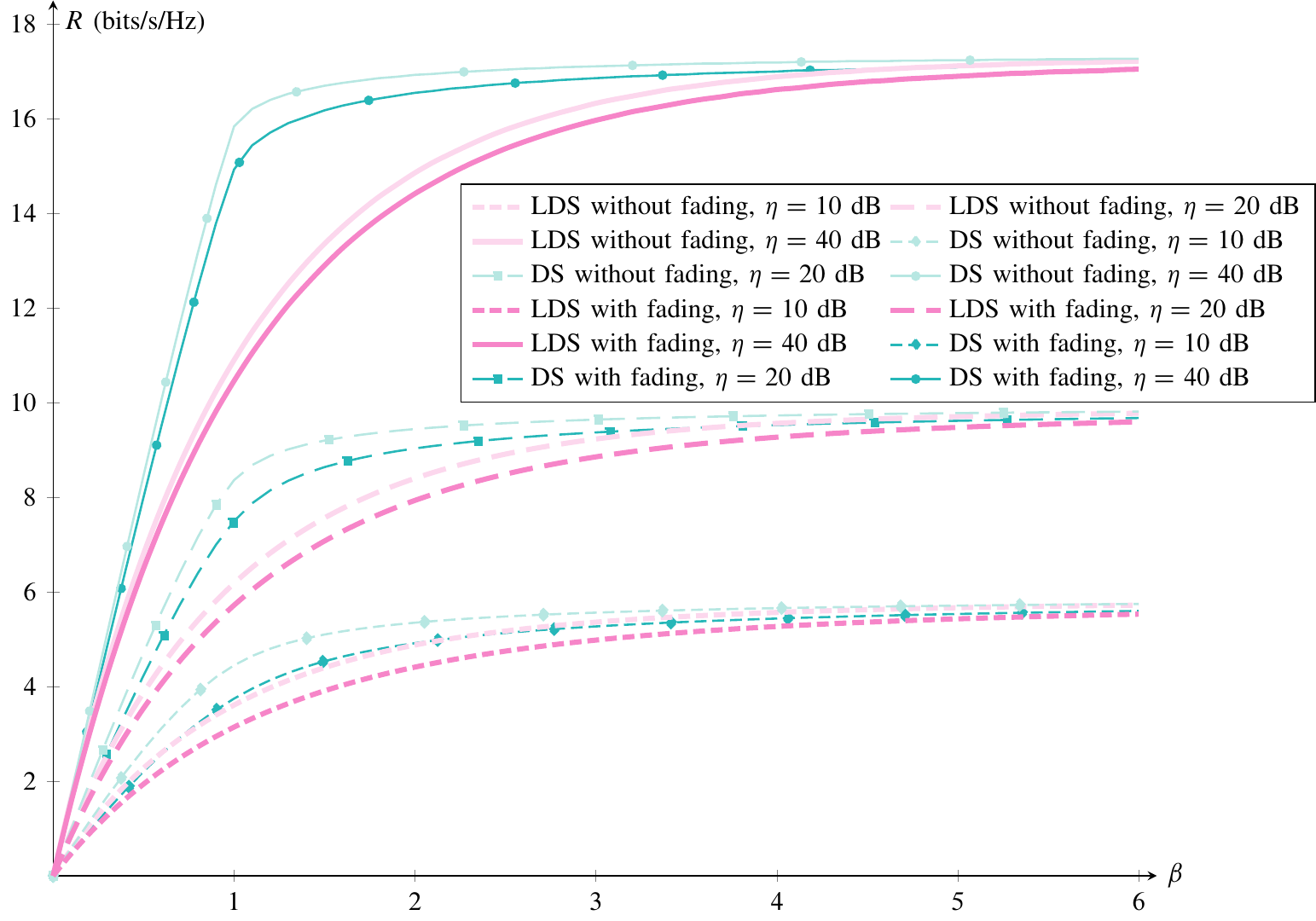}
\caption{Achievable spectral efficiency (bits/s/Hz) of LDS (thick lines) and DS (thin lines \RevAdd{with marks}) with optimum decoding for several values of $\eta$ as a function of the load in the presence (dark shade) or absence (light shade) of fading.}
\label{fig:fig1b}
\end{figure*}

Therefore, Theorem \ref{theo:mL} and Lemma \ref{lem:Carleman} imply that the probability measure $F(\lambda)$ in \eqref{eq:C_opt_LSD} is the probability measure of a compound Poisson distribution generated by the sum of a mean-$\beta$ Poissonian number of unit-rate exponentially distributed random variables:
\begin{equation}
\label{eq:meas}
F(\diff{\lambda}) = e^{-\beta} \delta_{0}(\diff{\lambda}) + \sum_{k\geq 1} \frac{e^{-\beta} \beta^{k}}{k!} \cdot \frac{e^{-\lambda} \lambda^{k-1}}{(k-1)!} \diff{\lambda}.
\end{equation}
The spectral efficiency in the large system limit is thus given by the average rate experienced through a set of parallel channels, indexed by $k=1,2,\cdots$, with signal-to-noise ratio equal to $\lambda\gamma$, used with probability $(e^{-\beta} \beta^{k}/k!) \cdot (e^{-\lambda} \lambda^{k-1}/(k-1)!) \diff{\lambda}$.
\RevAdd{ Indeed, this observation may validate the claim by Guo and Verd\'u that in the large system limit, the CDMA channel followed by multiuser detection can be decoupled into a bank of parallel Gaussian channels, each channel per user \cite{GuoVer:2005}. This is referred to as decoupling principle, which leads to the convergence of the mutual information of multiuser detection for each user to that of equivalent single-user Gaussian channel as the number of users go to infinitive, given the same input constraints. Given that the randomness of ESD vanishes in the large system limit (cf. Theorem \ref{theo:mL}), one may invoke the ``self-averaging'' property in the statistical physics \cite{GuoVer:2005}. Similarly to CDMA, the self-averaging principle yields to the strong property that for almost all realizations of the spreading sequences and noise of low-density NOMA, the macroscopic quantity (spectral efficiency in this case) converges to an equivalent deterministic quantity in the large system regime.
}

\begin{theorem}
\label{theo:C_opt}
The spectral efficiency with optimum decoding in the large system limit is given by
\begin{equation}
\label{eq:COPT}
\hspace{-0.75ex}C^{\textup{opt}}(\beta,\gamma)\hspace{-0.25ex}=\hspace{-0.25ex}\sum_{k\geq 1} \frac{e^{-\beta} \beta^{k}}{k!}\hspace{-0.25ex} \int_{0}^{\infty} \frac{e^{-\lambda} \lambda^{k-1}}{(k-1)!} \log_2(1+\gamma \lambda) \diff{\lambda}.
\end{equation}
\end{theorem}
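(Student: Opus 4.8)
The plan is to read off the limiting spectral distribution from the moment data already assembled, and then push the functional $\lambda\mapsto\log_2(1+\gamma\lambda)$ through the limit. First I would invoke the method of moments: by Theorem~\ref{theo:mL} the $L$-th moment $m_L$ of the ESD $F_N^{\bm{S}\bm{A}\bm{A}^{\ast}\bm{S}^{\ast}}$ converges in probability to $\bar m_L$ for every $L\geq 1$, and by Lemma~\ref{lem:Carleman} the sequence $(\bar m_L)_{L\geq 1}$ satisfies Carleman's condition, hence is the moment sequence of a unique probability measure, which is the compound Poisson law $F$ displayed in \eqref{eq:meas}. A standard subsequence argument (extract along a diagonal selection an a.s.\ convergent subsequence of the moments, apply the Fr\'echet--Shohat theorem there, and conclude by uniqueness of the limit) upgrades this to: $F_N^{\bm{S}\bm{A}\bm{A}^{\ast}\bm{S}^{\ast}}$ converges weakly to $F$, in probability; equivalently, $\int g\,\diff{F_N^{\bm{S}\bm{A}\bm{A}^{\ast}\bm{S}^{\ast}}}\toprinline\int g\,\diff F$ for every bounded $g$ continuous $F$-a.e.

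The point that needs care --- and the one I expect to be the real obstacle --- is that the integrand in \eqref{eq:C_opt_ESD}, namely $\lambda\mapsto\log_2(1+\gamma\lambda)$, is continuous but unbounded, so weak convergence does not by itself give $C_N^{\textup{opt}}\toprinline C^{\textup{opt}}$. I would close this with a truncation fed by the convergence of the second moment. Fix $M>0$; since $F$ has an atom only at the origin, $F(\{M\})=0$, so the bounded function $\lambda\mapsto\log_2(1+\gamma\lambda)\Mathbb{1}_{\{\lambda\leq M\}}$ is continuous $F$-a.e.\ and its ESD-integral converges in probability to its $F$-integral by the previous paragraph. For the tail, use $0\leq\log_2(1+\gamma\lambda)\leq(\gamma/\log 2)\,\lambda\leq(\gamma/(M\log 2))\,\lambda^2$ on $\{\lambda>M\}$, so the tail contribution to $C_N^{\textup{opt}}$ is at most $(\gamma/(M\log 2))\,m_2$, and $m_2\toprinline\bar m_2=2\beta+\beta^2<\infty$ by Theorem~\ref{theo:mL}; the same bound controls the $F$-tail through $\bar m_2$, while $\int_0^{M}\log_2(1+\gamma\lambda)\,\diff F(\lambda)\uparrow\int_0^{\infty}\log_2(1+\gamma\lambda)\,\diff F(\lambda)$ (finite since $\bar m_1=\beta<\infty$). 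Letting $N\to\infty$ and then $M\to\infty$ yields $C_N^{\textup{opt}}(\beta,\gamma)\toprinline\int_0^{\infty}\log_2(1+\gamma\lambda)\,\diff{F(\lambda)}=:C^{\textup{opt}}(\beta,\gamma)$, which legitimizes \eqref{eq:C_opt_LSD}.

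It remains to evaluate this integral against \eqref{eq:meas}. The atomic part $e^{-\beta}\delta_0$ contributes $e^{-\beta}\log_2(1+0)=0$ (this is why the sum in \eqref{eq:COPT} starts at $k=1$); on the absolutely continuous part the integrand is nonnegative, so Tonelli's theorem lets me exchange the sum over $k$ with the integral over $\lambda$, giving
\begin{equation*}
C^{\textup{opt}}(\beta,\gamma)=\sum_{k\geq 1}\frac{e^{-\beta}\beta^{k}}{k!}\int_{0}^{\infty}\frac{e^{-\lambda}\lambda^{k-1}}{(k-1)!}\log_2(1+\gamma\lambda)\,\diff{\lambda},
\end{equation*}
which is exactly \eqref{eq:COPT}. (If desired, each inner integral can be put in closed form by repeated integration by parts, producing an exponential-integral term, but that is not needed for the statement.) Apart from the uniform-integrability step, everything here is substitution together with the moment machinery already in place.
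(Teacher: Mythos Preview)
Your proposal is correct and, at its core, coincides with the paper's proof: the paper's argument for Theorem~\ref{theo:C_opt} is literally the one-liner ``plug \eqref{eq:meas} into \eqref{eq:C_opt_LSD} and commute summation and integration by Tonelli,'' which is exactly your final paragraph. The difference is that you supply what the paper leaves implicit in the text preceding the theorem: the paper simply asserts that if the LSD exists then $C_N^{\textup{opt}}\to C^{\textup{opt}}$ as in \eqref{eq:C_opt_LSD}, and then identifies $F$ from Theorem~\ref{theo:mL} and Lemma~\ref{lem:Carleman}; you instead spell out the method-of-moments/Fr\'echet--Shohat step and, more importantly, address the unboundedness of $\log_2(1+\gamma\lambda)$ via a truncation controlled by the convergence of the second moment. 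That uniform-integrability argument is a genuine addition of rigor over the paper, which does not comment on the passage from weak convergence of $F_N$ to convergence of the logarithmic functional.
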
 

\begin{proof} Plug \eqref{eq:meas} into \eqref{eq:C_opt_LSD} and commute summation and integration, which is follows from Tonelli's theorem. 
\end{proof}

Similarly to the previous section, it is interesting also here to study the asymptotic behavior of spectral efficiency as a function of $\eta$. In the low-SNR regime, the minimum energy per bit per noise level is given by (see Appendix~\ref{app:etaminOPT} for the proof)
\begin{equation}\label{eq:etaminOPT}
\eta_{\textup{min}} = \lim_{\gamma\to 0} \frac{\beta \gamma}{C^{\textup{opt}}_{\textup{lds}}(\beta,\gamma)} = \log 2 \ \textup{dB},
\end{equation}
as in the case without fading, and with fading and dense spreading, irrespective of $\beta>0$. The slope at $\eta=\eta_{\textup{min}}$ is (see Appendix~\ref{app:slope0OPT} for the proof)
\begin{equation}\label{eq:slope0OPT}
\hspace{-0.5ex}\Mathcal{S}_{0}^{\,\textup{opt}} = 2\log 2\lim_{\gamma\to0}\frac{\big(\frac{\partial}{\partial\gamma}C^{\textup{opt}}_{\textup{lds}}\thspace\big)^{\negthspace 2}}{-\frac{\partial^{2}}{\partial\gamma^{2}}C^{\textup{opt}}_{\textup{lds}}}\hspace{-0.25ex} =\hspace{-0.25ex} \frac{2\beta}{\beta+2}\ \textup{bits/s/Hz/(3 dB)},
\end{equation}
which is the same as with dense signaling in the presence of fading (cf. (147) in \cite{ShaVer:2001}). In the high-SNR regime, rate grows logarithmically with high-SNR slope equal to (see Appendix~\ref{app:slopeinfOPT} for the proof)
\begin{equation}\label{eq:slopeinfOPT}
\Mathcal{S}_{\infty}^{\,\textup{opt}} = \log 2\lim_{\gamma\to\infty}\gamma \frac{\partial C^{\textup{opt}}_{\textup{lds}}}{\partial\gamma} = 1-e^{-\beta} \ \textup{bits/s/Hz/(3 dB)},
\end{equation}
which is the same as without fading.

\begin{figure*}[p]
\centering
\includegraphics{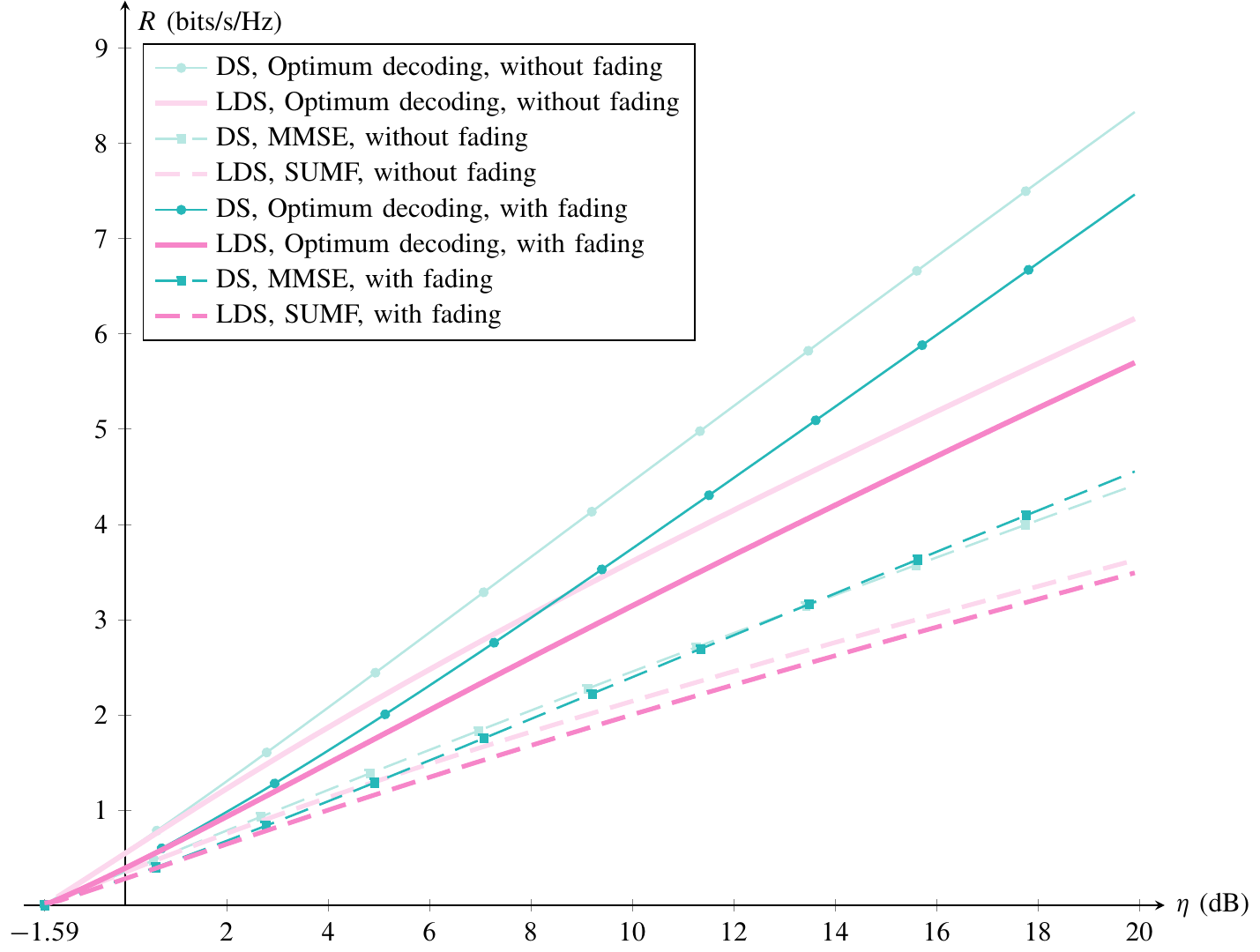}
\caption{Achievable spectral efficiency (bits/s/Hz) of LDS (thick lines) and DS (thin lines \RevAdd{with marks}) with optimum detection as a function of $\eta$ (dB) with load $\beta=1$ in the presence (dark shade) or absence (light shade) of fading. 
}
\label{fig:fig2}
\medskip
\includegraphics{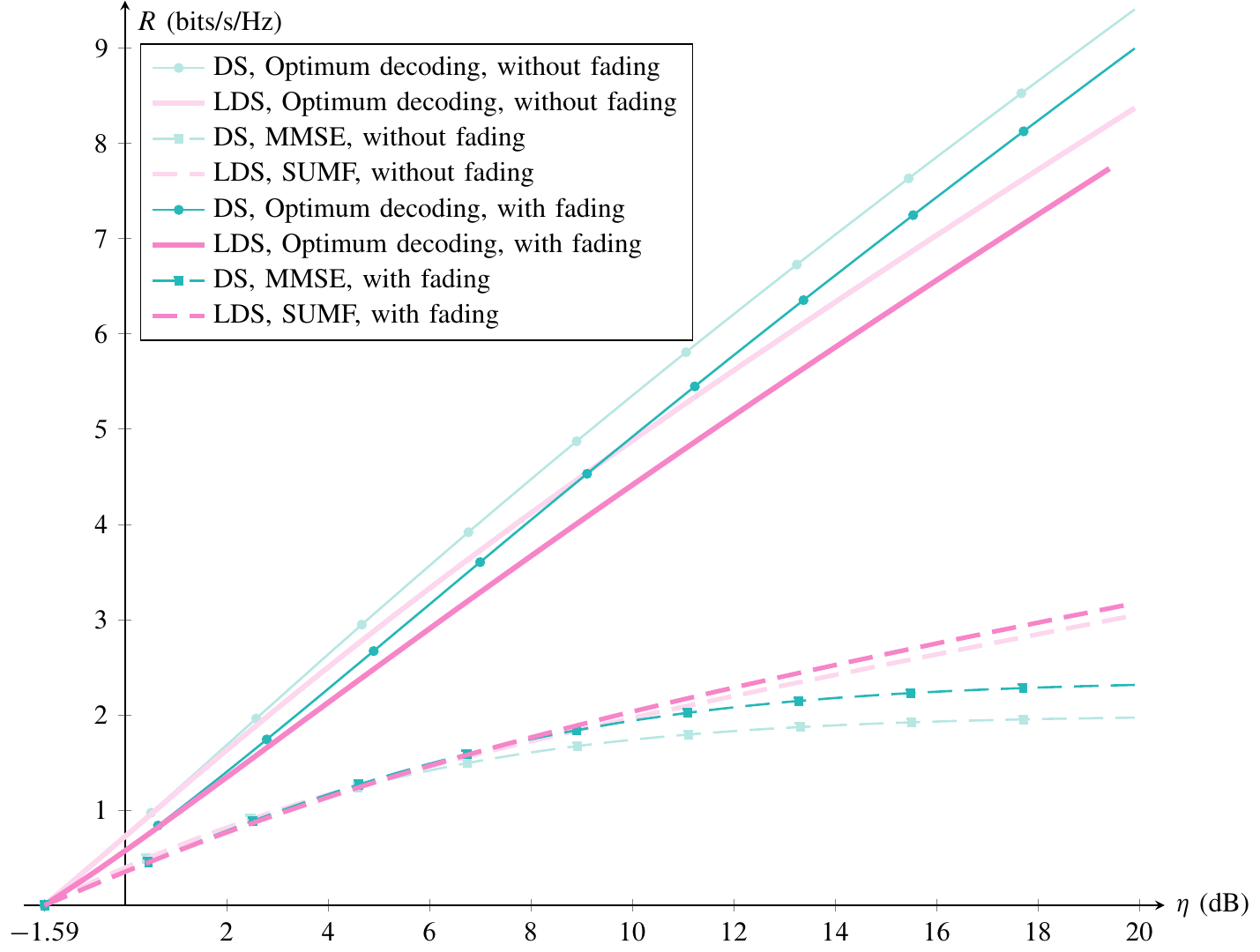}
\caption{Achievable spectral efficiency (bits/s/Hz) of LDS (thick lines) and DS (thin lines \RevAdd{with marks}) with optimum detection as a function of $\eta$ (dB) with load $\beta=2$ in the presence (dark shade) or absence (light shade) of fading. 
}
\label{fig:fig2b}
\end{figure*}


Figure \ref{fig:fig1b} shows the achievable spectral efficiency with optimum decoding and compares DS and LDS, in the presence and absence of fading. It is shown that, in general, LDS underperforms DS irrespective of fading; however, the main gap is concentrated around $\beta=1$, and decreases as load goes either to $0$ or $\infty$.
 
We conclude this section by highlighting the combinatorial connection between the moments found in Theorem \ref{theo:mL} and moments of the Mar\u{c}enko-Pastur and Poisson laws (see also Table~\ref{tab:tab1}), which correspond to the limiting spectral distributions of dense \cite{VerSha:1999} and sparse \cite{FerDiB:2015} schemes, respectively. We showed that the $L$\textsuperscript{th} moment is essentially a polynomial in $\beta$ with coefficients equal to Lah numbers. Similar results hold for dense and sparse schemes without fading, where Lah numbers are replaced by Narayana numbers and Stirling numbers of the second kind, respectively. All numbers are well-known in combinatorics: 
Narayana numbers enumerate non-crossing partitions into nonempty subsets; Stirling numbers of the second kind enumerate partitions into nonempty subsets; and Lah numbers enumerate partitions into nonempty linearly ordered subsets \cite{Com:1974}.

\begin{table*}[t]
\caption{Summary of LSDs, moments, and their combinatorial structure for different scenarios with and without fading.}
\label{tab:tab1}\small\bigskip
\centering
\begin{tabular}{c M{4cm} M{3cm} M{4.5cm} @{}m{0pt}@{}}
\toprule
\textbf{Scenario} & \textbf{LSD} & \textbf{LSD moment} $\bar{m}_{L}$ & \textbf{Coefficient}  &\\[10pt]
\midrule
DS with no fading & Mar\u{c}enko-Pastur law & $\sum_{l=1}^L \Mathcal{N}_l \beta^l$ & $\Mathcal{N}_l=\frac{1}{L} \binom{L}{l}\binom{L}{l-1}$ &\\[10pt] 
LDS with no fading & Poisson law & $\sum_{l=1}^L  {L \brace l}  \beta^l$ & ${L \brace l} = \frac{1}{l!} \sum_{j=0}^l (-1)^{l-j}\binom{l}{j}j^L$ &\\[10pt]
LDS with fading & Compound Poisson law &$\sum_{l=1}^L \lahinline{L}{l}\beta^l $ & $\lahinline{L}{l}= \binom{L-1}{l-1}\frac{L!}{l!}$ &\\[10pt]
\bottomrule
\end{tabular}
\medskip
\end{table*}

\subsection{Synopsis of results for LDS vs DS systems}
We collect the main results on LDS and DS systems from another perspective, namely for the case of fixed load and variable $\eta$, in Figs. \ref{fig:fig2} and \ref{fig:fig2b}. For fixed $\beta$, one can find the spectral efficiency as a function of $\eta$ by solving \eqref{eq:EbN0_R} with respect to $\gamma$ and computing the spectral efficiency for such value of $\gamma$. Achievable spectral efficiency (b/s/Hz), as a function of $\eta$ with optimum and linear detection in the presence and absence of fading, is shown for $\beta=1$ and $\beta=2$, respectively. To summarize the sources, results for DS were derived in \cite{VerSha:1999} without fading and in \cite{ShaVer:2001} with fading, whereas results for LDS without fading were derived in \cite{FerDiB:2015}.

Both figures show that all schemes are equivalent in the low-SNR regime, and that DS outperforms LDS with optimum decoding, particularly in the high-SNR regime, where spectral efficiency of the two schemes is characterized by different slopes. With linear detection, the scenario is completely different: when load increases beyond approximately unity, LDS outperforms DS, with a widening gap as $\eta$ increases. Indeed, LDS keeps a positive high-SNR slope while DS cannot afford it (cf. \eqref{eq:slopeinf} vs. \eqref{eq:slopeinfds}). Note the effect of the ``interference population control'' with DS (cf. Section~\ref{sec:previousresults}) on both figures: spectral efficiency with fading is higher than spectral efficiency without fading. A similar behavior holds with LDS as shown on Fig.~\ref{fig:fig2b}. 



\section{Conclusion}\label{sec:conclusion}
In this paper, a theoretical analysis of LDS systems in the presence of flat fading in terms of spectral efficiency with linear and optimum receivers was carried out in the large system limit, i.e., as both the number of users $K$ and the number of degrees of freedom $N$ grow unboundedly, with a finite ratio $\beta=K/N$. Spectral efficiency was derived as a function of the load $\beta$ and signal-to-noise ratio $\gamma$. The framework used extended the model in \cite{FerDiB:2015}, which was build on the seminal work \cite{VerSha:1999}, to the case with fading, along the lines of \cite{ShaVer:2001}. 

In the absence of fading, previous work showed that, in the large system limit DS has higher spectral efficiency than LDS when the system is underloaded ($\beta<1$). However, a drastic drop occurs at about $\beta=1$, and eventually, in the overloaded regime ($\beta >1$), LDS outperforms DS \cite{FerDiB:2015}. In this paper, we were to able to show that this is the case also in the presence of fading. 

This is particularly important in view of massive deployment of wireless devices and ultra-densification of the network towards 5G. Overloaded systems, where the number of resources is lower than the number of users accessing the network, will play a pivotal role in 5G, and this paper provides a theoretical ground for choosing LDS with respect to DS, and more generally choosing sparsity over density in signaling formats. 


Future investigations will focus on refining the understanding of overloaded systems with a more general structure of the sparsity of spreading sequences, e.g., when $N_{\textup{S}}>1$. It is interesting to understand which value of $N_{\textup{S}}$ represents the boundary between dense and low-dense systems, in terms of capacity, and more generally how the system behaves as a function of $N_{\textup{S}}$.


%
\appendices

\section{Proof of Theorem \ref{theo:SUMF}}
\label{app:theo1}
In order to compute \eqref{eq:I_SUMF_gen} we need to find the distribution of $\zeta:=\sum_{k=2}^{K}\rho_{1k}^2|\thspace a_k|^2$. We recall that $\rho_{1k}:=\bm{s}_{\negthspace 1}^{\T} \bm{s}_k$, hence the moment generating function (MGF) of $\rho_{1k}^{2}$ is
\begin{equation*}
M_{\rho^{2}}(t):=\E{e^{t\rho^{2}}} = \Big(1-\frac{1}{N}\Big)+\frac{1}{N}e^{t},
\end{equation*}
irrespective of $k$. Since $|\thspace a_k|^2$ is exponentially distributed with unit rate, we conclude
\begin{equation*}
M_{\rho^{2}|\thspace a|^2}(t):=\E{e^{t\rho^{2}|\thspace a|^2}} = \E{M_{\rho^{2}}(t|\thspace a|^{2})} = 1+ \frac{t}{(1-t)N}.
\end{equation*}
Therefore, the MGF of $\zeta$ is $(M_{\rho^{2}|\thspace a|^2}(t))^{K-1}$, and in the large system limit
\begin{equation*}
M_{\zeta}(t) \to e^{\beta\frac{t}{1-t}}.
\end{equation*}
Now, we express the logarithm via the following integral representation \cite{AbrSte:1972}
 \begin{equation*} 
\label{eq:log1x}
\log(1+x)= \int_{0}^{\infty} \frac{1}{s}(1-e^{-sx}) e^{-s} \diff{s},
\end{equation*}
which is valid for $x\geq 0$. Hence one has
\begin{equation*}
\log\left(1+\frac{|\thspace a|^2}{\zeta+1/\gamma}\right) = \int_{0}^{\infty} \frac{\diff{s}}{s} e^{-s/\gamma} (1-e^{-s|\thspace a|^{2}})e^{-s\zeta},
\end{equation*}
and by taking the expectation and changing variable, $t=\frac{s}{1+s}$,
\begin{equation*}
\Expectation{\log\left(1+\frac{|\thspace a|^2}{\zeta+1/\gamma}\right)}
=\int_{0}^{1} \frac {e^{-t \left(\beta +\frac{1}{(1-t)\gamma}\right)}}  {1-t} dt.
\end{equation*}

\section{Proof of \eqref{eq:etamin}}
\label{app:lowsnr}
With a change of variable, we can rewrite $R_{\textup{lds}}^{\textup{sumf}}(\beta,\gamma)$ as follows:
\[ R_{\textup{lds}}^{\textup{sumf}}(\beta,\gamma) = \frac{\beta}{\log 2}g_{\beta}(1/\gamma), \]
where 
\begin{equation}\label{eq:galpha}
g_{\beta}(\alpha):=\int_{0}^{\infty}\diff{z} e^{-\alpha z}e^{-\beta\frac{z}{1+z}}\frac{1}{1+z}. 
\end{equation}
We observe that $\eta_{\textup{min}}$ is expressed in terms of $g_{\beta}(\alpha)$ as follows:
\[ \eta_{\textup{min}} = \lim_{\alpha\to\infty} \frac{\log 2}{\alpha g_{\beta}(\alpha)},\]
hence we need to study $\alpha g_{\beta}(\alpha)$ as $\alpha\to\infty$. Since $g_{\beta}(\alpha)$ does not admit a closed form, we have to study the specific integral in \eqref{eq:galpha}. The basic observation is that the term $e^{-\beta\frac{z}{1+z}}$ is bounded on the integration interval from below and above, namely $e^{-\beta\frac{z}{1+z}}\in(e^{-\beta},1]$. Furthermore, most of the mass is concentrated in a neighborhood of $z=0$, as $\alpha$ increases. It makes sense to partition the domain $[0,\infty)$ in two subintervals, $[0,\epsilon)$ and $[\epsilon,\infty)$, for some $\epsilon>0$ fixed:
\begin{equation}\label{eq:galphasplit}
g_{\beta}(\alpha) =\int_{0}^{\epsilon}\diff{z} e^{-\alpha z}e^{-\beta\frac{z}{1+z}}\frac{1}{1+z}+\int_{\epsilon}^{\infty}\diff{z} e^{-\alpha z}e^{-\beta\frac{z}{1+z}}\frac{1}{1+z}.
\end{equation}
The first integral in \eqref{eq:galphasplit} is upper and lower bounded by
\begin{equation}\label{eq:galphasplit1}C_{1}(\epsilon)\int_{0}^{\epsilon}\diff{z} e^{-\alpha z}\frac{1}{1+z}, \end{equation}
with $C_{1}(\epsilon)=\bar{C}_{1}(\epsilon):=1$ and $C_{1}(\epsilon)=\ubar{C}_{1}(\epsilon):=e^{-\beta\frac{\epsilon}{1+\epsilon}}$, respectively. Similarly, the second integral in \eqref{eq:galphasplit} is upper and lower bounded by 
\begin{equation}\label{eq:galphasplit2} C_{2}(\epsilon)\int_{\epsilon}^{\infty}\diff{z} e^{-\alpha z}\frac{1}{1+z}, \end{equation}
with $C_{2}(\epsilon)=\bar{C}_{2}(\epsilon):=\ubar{C}_{1}(\epsilon)$ and $C_{2}(\epsilon)=\ubar{C}_{2}(\epsilon):=e^{-\beta}$, respectively. The integrals in \eqref{eq:galphasplit1}--\eqref{eq:galphasplit2} can be expressed by means of known functions,
\begin{align*}
\alpha\int_{0}^{\epsilon}\diff{z} e^{-\alpha z}\frac{1}{1+z}
	& =\alpha e^{\alpha}[E_{1}(\alpha)-E_{1}(\alpha(1+\epsilon))] \\
	& =1+O(1/\alpha),
\end{align*}
where $E_{1}(x)$ denotes the exponential integral\footnote{$E_{n}(x):=\int_{1}^{\infty}\diff{t}\frac{1}{t^{n}}e^{-xt}$ for all $x>0$ and $n$ positive integer.} for $x>0$, for which the following asymptotic expansion holds $\alpha e^{-\alpha}E_{1}(\alpha)=1-\alpha^{-1}+O(\alpha^{-2})$, and 
\begin{align*}
\alpha\int_{\epsilon}^{\infty}\diff{z} e^{-\alpha z}\frac{1}{1+z}
	& =\alpha e^{\alpha}E_{1}(\alpha(1+\epsilon)) \\
	& \leq e^{-\epsilon\alpha} (1+\epsilon)^{-1},
\end{align*}
which vanishes as $\alpha\to\infty$, where the inequality follows from the standard bracketing of $E_{1}$ through elementary functions. Therefore, we proved that, for all $\epsilon>0$, the term
\begin{equation}\label{eq:galphasplit1b} C_{1}(\epsilon)\;\alpha\int_{0}^{\epsilon}\diff{z} e^{-\alpha z}\frac{1}{1+z} = C_{1}(\epsilon) + O(1/\alpha),\end{equation}
contributes finitely to the integral, while the term
\begin{equation}\label{eq:galphasplit2b} C_{2}(\epsilon)\;\alpha \int_{\epsilon}^{\infty}\diff{z} e^{-\alpha z}\frac{1}{1+z} \to 0, \end{equation}
asymptotically vanishes. Hence, $\alpha g_{\beta}(\alpha)\to C_{1}(\epsilon)$ as $\alpha\to\infty$, and the result follows since $\epsilon>0$ is arbitrary.

\section{Proof of \eqref{eq:slope0}}
\label{app:lowsnrslope}
A sketch of the proof is provided. The slope can be written as follows:
\begin{equation}\label{eq:slope0bis}
\Mathcal{S}_{0} = \beta\lim_{\alpha\to\infty}\frac{\alpha I_{1}(\alpha)^{2}}{I_{1}(\alpha)-\frac{\alpha}{2}I_{2}(\alpha)},
\end{equation}
where 
\[ I_{k}(\alpha):=\int_{0}^{\infty} \diff{z} \frac{z^{k}}{1+z}e^{-z\big(\alpha+\frac{\beta}{1+z}\big)}. \]
As $\alpha\to\infty$, the mass of the integral is increasingly concentrated in a neighborhood of the origin, say $z\in[0,\epsilon]$:
\[ I_{k}(\alpha) \sim \int_{0}^{\epsilon} \diff{z} \frac{z^{k}}{1+z}e^{-z\big(\alpha+\frac{\beta}{1+z}\big)},\ \alpha\to\infty. \]
For any fixed $\epsilon>0$, it results $\frac{\beta}{1+z}\in[\frac{\beta}{1+\epsilon},\beta]$; therefore, as $\alpha\to\infty$ it also results
\[ I_{k}(\alpha) \sim \int_{0}^{\epsilon} \diff{z} \frac{z^{k}}{1+z}e^{-z(\alpha+\beta)},\ \alpha\to\infty. \]
The above integral can be expressed in closed form for $k=1$ and $k=2$. The result follows by computing the limit of the ratio in \eqref{eq:slope0bis}, which turns out not to depend on $\epsilon$.

\section{Proof of \eqref{eq:slopeinf}}
\label{app:slopeinf}
By explicitly computing the derivative in the definition of $\Mathcal{S}_{\infty}$, we can rewrite it as follows:
\begin{equation}\label{eq:slopeinf2}
\Mathcal{S}_{\infty} = \beta \lim_{\alpha\to0} \alpha I_{\beta}(\alpha),
\end{equation}
where 
\begin{equation}\label{eq:slopeinf3}
I_{\beta}(\alpha) := \int_{0}^{\infty} \diff{z} e^{-\alpha z}e^{-\beta \frac{z}{1+z}}\frac{z}{1+z}.
\end{equation}
The idea of the proof is to find upper and lower bounds on $\alpha I_{\beta}(\alpha)$, that match in the limit. To this end, observe that
\begin{equation}\label{eq:slopeinf4}
e^{-\beta}\frac{z}{1+z} \leq e^{-\beta \frac{z}{1+z}}\frac{z}{1+z} \leq e^{-\beta}\bigg(1+\frac{(\beta-1)^{+}}{z}\bigg).
\end{equation}
Hence, a lower bound is
\begin{align}
\alpha I_{\beta}(\alpha) 	& \geq \alpha \int_{0}^{\infty} \diff{z} e^{-\alpha z} e^{-\beta}\frac{z}{1+z} \nonumber \\
			& = e^{-\beta}(1-\alpha e^{\alpha}E_{1}(\alpha)) \nonumber \\
			& \to e^{-\beta}. \label{eq:slopeinf5}
\end{align}
In order to compute the upper bound, split the domain of integration to avoid a singularity at $z=0$ (cf. \eqref{eq:slopeinf4}) as follows. For any $\epsilon>0$, it results
\begin{align}
\alpha I_{\beta}(\alpha) 	& \leq \alpha \int_{0}^{\epsilon} \diff{z} z + \alpha \int_{\epsilon}^{\infty} \diff{z} e^{-\alpha z} e^{-\beta}\bigg(1+\frac{(\beta-1)^{+}}{z}\bigg) \nonumber \\
			& = \alpha \frac{\epsilon^{2}}{2} + e^{-\beta-\epsilon\alpha}+(\beta-1)^{+}\alpha E_{1}(\epsilon\alpha) \nonumber \\
			& \to e^{-\beta}, \label{eq:slopeinf6}
\end{align}
where for $z\in[0,\epsilon]$ we used a trivial upper bound for the integrand of $I_{\beta}(\alpha)$. The result follows from \eqref{eq:slopeinf2}, \eqref{eq:slopeinf5} and \eqref{eq:slopeinf6}.

\section{Proof of Theorem \ref{theo:mL}}
\label{app:mL}
In this appendix, we compute the moments \eqref{eq:m_L} and prove that convergence in probability to their mean holds.

The first remark is that matrix $\bm{S}\bm{A}\bm{A}^{\ast}\bm{S}^{\ast}$ is diagonal:
\begin{equation}
\label{eq:move_randomness}
\begin{aligned}
\bm{S}\bm{A}\bm{A}^{\ast}\bm{S}^{\ast}
&= \sum_{j=1}^K\bm{s}_j\bm{a}_j \bm{a}_j^ \ast \bm{s}_j^{\ast} \\
&= \sum_{j=1}^K  |\thspace a_j|^2 \bm{e}_{\pi_j}^{K} \bm{e}_{\pi_j}^{K\ast}\\
&=\sum_{i=1}^N\Bigg(\sum_{j=1}^K\Mathbb{1}_{\{\pi_j=i\}} |\thspace a_j|^2  \Bigg) \bm{e}_{i}^{N} \bm{e}_{i}^{N\ast},\\
\end{aligned}
\end{equation}
where $\pi_k$ denotes the nonzero element of the signature $\bm{s}_k$ and $\bm{e}_i^{n}$ denotes the $i$\textsuperscript{th} vector of the canonical basis of $\Mathbb{R}^{n}$. In the last step, we move randomness from vectors to scalars, which will be shortly useful. Indeed, $m_{L}$ can be written as follows:
\begin{equation}
\label{eq:m_L2}
\begin{aligned}
m_L &=\frac{1}{N} \textup{tr}(\bm{S}\bm{A}\bm{A}^{\ast}\bm{S}^{\ast})^L\\
&=\frac{1}{N} \sum_{i=1}^N ([\bm{S}\bm{A}\bm{A}^{\ast}\bm{S}^{\ast}]_{ii})^L\\
&= \frac{1}{N}  \sum_{i=1}^N \Bigg(\sum_{j=1}^K \Mathbb{1}_{\{\pi_j=i\}} |\thspace a_j|^2 \Bigg)^{\! L}.\\
\end{aligned}
\end{equation}
Call the sum in parenthesis $S_{i}$:
\begin{equation}\label{eq:Si}
S_{i}:= \sum_{j=1}^K \Mathbb{1}_{\{\pi_j=i\}} |\thspace a_j|^2.
\end{equation}
Hence, the expected value of $m_{L}$ is
\begin{equation}
\E{m_{L}} = \E{S_{1}^{L}}=M_{S_{1}}^{(L)}(0),
\end{equation}
where $M_{S_{1}}^{(L)}$ denotes the $L$\textsuperscript{th} derivative of the MGF of $S_{1}$. It can be shown that $M_{S_i}(t) = \big(1+ \frac{t}{(1-t)N}\big)^{\negthspace K}$, hence
\begin{equation}
M_{S_i}^{(L)}(0)=\sum_{l=1}^L \binom{L-1}{l-1} \frac{L!}{l!} \frac{K!}{(K-l)!\,N^l},
\end{equation}
which in the large system limit becomes
\begin{equation}
\label{eq:E_Y_L}
\E{m_{L}} \to \sum_{l=1}^L \lah{L}{l} \beta^l,
\end{equation}
where Lah numbers make their appearance $\lahinline{L}{l}:=\binom{L-1}{l-1} \frac{L!}{l!}$. Alternatively, $\E{m_{L}}$ can be expressed by using generalized Laguerre polynomials, which naturally appear in the Taylor expansion of the asymptotic MGF of $S_{1}$.

In order to prove convergence in probability, it is sufficient to show that $\Var{m_{L}}=\E{m_{L}^{2}}-(\E{m_{L}})^{2}\to 0$. We have already found $\E{m_{L}}$.  By using \eqref{eq:m_L2} and \eqref{eq:Si}, $\E{m_{L}^{2}}$ can be expressed as follows:
\begin{align}
\E{m_{L}^{2}} 
 & = \Mathbb{E}\thspace\bigg[{\bigg( \frac{1}{N}\sum_{i=1}^{N}S_{i}^{L} \bigg)^{\! 2}}\bigg]\\
 & = \frac{1}{N^{2}}\sum_{i=1}^{N} \E{S_{i}^{2L}} + \frac{1}{N^{2}}\sum_{i\neq j} \E{S_{i}^{L}S_{j}^{L}}.
\end{align}
The first term is $O(1/N)$ because $\E{S_{i}^{2L}}$ is bounded in the large system limit. The second term tends to $\E{S_{1}^{L}S_{2}^{L}}$. In order to show that this term becomes asymptotically equal to $\E{m_{L}}^{2}$, we can actually show more, namely $S_{1}$ and $S_{2}$ are asymptotically independent ($S_{1}\perp S_{2}$). 

To this end, interpret $S_{i}$ as the sum of a (random) number of weights $w_{k}:=|\thspace a_{k}|^{2}$, namely $S_{i}=\sum_{k\in\Mathscr{K}_{i}}w_{k}$ for $\Mathscr{K}_{i}:=\{k:\pi_{k}=i\}\subseteq[K]$. $\Mathscr{K}_{i}$ is the subset of users who have chosen dimension $i$. Since the weights are i.i.d. random variables, the only source of dependence between $S_{i}$ and $S_{j}$ lies in the number of users who have chosen dimensions $i$ and $j$, respectively. These numbers are $K_{i}:=|\Mathscr{K}_{i}|$ and are not independent. Indeed, the vector $(K_{1},K_{2},\dotsc,K_{N})$ is distributed according to a Multinomial law with probabilities $(1/N,1/N,\dotsc,1/N)$. In particular, the MGF of $(K_{1},K_{2})$ is
\[ M_{K_{1},K_{2}}(t_{1},t_{2})=\bigg( \frac{1}{N}(e^{t_{1}}+e^{t_{2}}+(N-2)) \bigg)^{\!K}, \]
and tends in the large system limit to
\[ M_{K_{1},K_{2}}(t_{1},t_{2}) \to e^{\beta(e^{t_{1}}-1)}\cdot e^{\beta(e^{t_{2}}-1)}, \]
where each term can be recognized as the MGF of a Poisson random variable with mean $\beta$. Since $K_{1}\perp K_{2}$ asymptotically, also $S_{1}\perp S_{2}$ from the independence of the weights.

\section{Verifying the Carleman Condition}
\label{app:Carleman}
Carleman's condition is $\sum_{k\geq 1} \bar{m}_{2k}^{-1/(2k)}=\infty$. We start off by upper bounding $\bar{m}_{2k}$ as follows:
\begin{equation}
\begin{aligned}
\bar{m}_{2k}&= \sum_{l=1}^{2k} \lah{2k}{l}\beta^l \\
& \overset{\textup{(a)}}{<} \sum_{l=1}^{2k} (2k-1)^{2k-l} \binom{2k}{l} \beta^l \\
& \overset{\textup{(b)}}{\leq} (2k-1)^{2k-1} \sum_{l=1}^{2k} \binom{2k}{l} \beta^l \\
& \overset{\textup{(c)}}{<} (2k-1)^{2k} (1+\beta)^{2k},\\
\end{aligned}
\end{equation}
where (a) follows from the inequality $\lahinline{2k}{l}=(2k-1)!/(l-1)!=(2k-1)(2k-2)\ldots (2k-(2k-l))<(2k-1)^{2k-l}$, $(\textup{b})$ derives from upper bounding $(2k-1)^{2k-l}$ with $(2k-1)^{2k-1}$, (c) is from the binomial formula by including in the sum the $l=0$ term. Therefore, $\bar{m}_{2k}^{1/(2k)}<(2k-1)(1+\beta)$, thus 
\[ \sum_{k\geq 1} \bar{m}_{2k}^{-1/(2k)}\geq \frac{1}{1+\beta} \sum_{k\geq 1} \frac{1}{2k-1} = \infty. \]

\section{Proof of \eqref{eq:etaminOPT}}\label{app:etaminOPT}
It is convenient to represent $C^{\textup{opt}}(\beta,\gamma)$ (in nats) as
\begin{equation}
\label{eq:COPTbis}
\hspace{-0.5ex}C^{\textup{opt}}(\beta,\gamma)\hspace{-0.25ex}=\hspace{-0.25ex} \sum_{k\geq 1} \frac{e^{-\beta} \beta^{k}}{k!}\! \int_{0}^{\gamma}\hspace{-0.25ex} k \exp(1/x)\,E_{1+k}(1/x) \frac{\diff{x}}{x},
\end{equation}
which can be derived by differentiating in \eqref{eq:COPT} under the integral sign with respect to $\gamma$ and integrating back after the integration with respect to $\lambda$. From the fundamental theorem of calculus, we have
\begin{equation*}
\frac{\partial}{\partial\gamma} \int_{0}^{\gamma} \exp(1/x)\,E_{1+k}(1/x) \frac{\diff{x}}{x} = \exp(1/\gamma)\, E_{1+k}(1/\gamma) \frac{1}{\gamma},
\end{equation*}
which tends to $1$ as $\gamma\to0$, hence, by L'H\^{o}pital's rule,
\begin{equation}\label{eq:1stderivative}
\hspace{-1ex}\lim_{\gamma\to0} \frac{C^{\textup{opt}}(\beta,\gamma)}{\gamma} \hspace{-0.25ex}=\hspace{-0.25ex} \lim_{\gamma\to0} \frac{\partial C^{\textup{opt}}(\beta,\gamma)}{\partial\gamma} \hspace{-0.25ex}=\hspace{-0.5ex} \sum_{k\geq 1} \frac{e^{-\beta} \beta^{k}}{k!} k \hspace{-0.25ex}=\hspace{-0.25ex}\beta.
\end{equation}

\section{Proof of \eqref{eq:slope0OPT}}\label{app:slope0OPT}
The second derivative of $C^{\textup{opt}}(\beta,\gamma)$ can be computed similarly to Appendix~\ref{app:etaminOPT}, which results in
\begin{multline}\label{eq:minus2ndderivative}
-\lim_{\gamma\to0}\frac{\partial^{2}}{\partial\gamma^{2}}C^{\textup{opt}}(\beta,\gamma) = -\sum_{k\geq 1} \frac{e^{-\beta} \beta^{k}}{k!} \lim_{\gamma\to0}\frac{\partial}{\partial\gamma} \exp(1/\gamma)\, E_{1+k}(1/\gamma) \frac{1}{\gamma}  \\= \sum_{k\geq 1} \frac{e^{-\beta} \beta^{k}}{k!} k(1+k) = 2\beta+\beta^{2}.
\end{multline}
The result follows by \eqref{eq:minus2ndderivative} and \eqref{eq:1stderivative}.

\section{Proof of \eqref{eq:slopeinfOPT}}\label{app:slopeinfOPT}
Using \eqref{eq:COPTbis} and the fundamental theorem of calculus yields
\begin{equation}
\hspace{-2ex}\lim_{\gamma\to\infty} \gamma\frac{\partial C^{\textup{opt}}_{\textup{lds}}}{\partial\gamma}
= \sum_{k\geq 1} \frac{e^{-\beta} \beta^{k}}{k!} \lim_{\gamma\to\infty} k \exp(1/\gamma)\, E_{1+k}(1/\gamma) 
= \sum_{k\geq 1} \frac{e^{-\beta} \beta^{k}}{k!} = 1-e^{-\beta}.
\end{equation}

\bibliographystyle{IEEEtran}
\bibliography{IEEEabrv,IEEE-conf-abrv,Biblio}
\end{document}